\documentclass[11pt,oneside]{article}

\usepackage[hidelinks]{hyperref}

\usepackage[T1]{fontenc}
\usepackage{mathpazo}
\linespread{1.3}

\usepackage[margin=1in,includefoot]{geometry}

\usepackage{graphicx}

\usepackage{amsmath}
\usepackage{amssymb}
\DeclareMathOperator*{\argmax}{arg\,max}

\DeclareMathOperator{\Cov}{Cov}
\DeclareMathOperator{\E}{E}
\DeclareMathOperator{\Var}{Var}
\newcommand{\parfrac}[2]{\frac{\partial #1}{\partial #2}}
\newcommand{\R}{\mathbb{R}}

\usepackage{amsthm}
\newtheorem{corollary}{Corollary}
\newtheorem{lemma}{Lemma}
\newtheorem{proposition}{Proposition}

\usepackage{natbib}
\setlength{\bibsep}{0pt plus 0.3ex}
\newcommand\citeapos[1]{\citeauthor{#1}'s (\citeyear{#1})}

\usepackage{multirow}
\usepackage{subcaption}

\usepackage{xcolor}

\title{Why do experts give simple advice?}
\author{
Benjamin Davies\thanks{
Department of Economics, Stanford University; bldavies@stanford.edu.
I thank Yunus Aybas, Doug Bernheim, Steve Callander, Matt Jackson, Spencer Pantoja, and Anirudh Sankar for helpful discussions.
}
}
\date{Draft version: September 2022}

\begin{document}

\maketitle

\begin{abstract}
    \noindent
    An expert tells an advisee whether to take an action that may be good or bad.
    He may provide a condition under which to take the action.
    This condition predicts whether the action is good if and only if the expert is competent.
    Providing the condition exposes the expert to reputational risk by allowing the advisee to learn about his competence.
    He trades off the accuracy benefit and reputational risk induced by providing the condition.
    He prefers not to provide it---i.e., to give ``simple advice''---when his payoff is sufficiently concave in the posterior belief about his competence.

    \vskip\baselineskip

    \noindent
    {\bfseries JEL codes}:
    D82, D83

    \noindent
    {\bfseries Keywords}:
    advice, career concerns, experts, information disclosure, reputation, science communication
\end{abstract}

\clearpage

\section{Introduction}

This paper discusses how reputational concerns impact the type of advice experts provide.
I frame advice as a ``rule'' or ``model'' that maps conditions to actions.
It makes a prediction of the \emph{ex post} ``correct'' action based on \emph{ex ante} unobserved conditions.
For example, it tells a patient whether to pursue medical treatment if they develop specific symptoms.
I assume that
\begin{enumerate}

    \item[(i)]
    the \emph{ex post}  correct action cannot be predicted perfectly, and

    \item[(ii)]
    the advisee does not know what conditions are relevant if the expert does not include them in their advice.

\end{enumerate}
These assumptions reflect the situation faced by public health officials and policy-makers early in the COVID-19 pandemic.
Assumption~(i) reflects how the mapping between medical conditions (e.g., being immunocompromised) and optimal policies (e.g., getting vaccinated) was uncertain because the relevant science was unsettled.
It implies that advice can be ``wrong'' (i.e., recommend the \emph{ex post} incorrect action) even if it includes all relevant conditions.
Assumption~(ii) reflects how the novelty of the relevant science prevented advisees from comparing experts' advice with public knowledge.
It implies that \emph{if} advice is wrong then experts cannot be punished for excluding relevant conditions.
Intuitively, advisees cannot observe bad outcomes and attribute them to causes experts ``should have known about'' because advisees do not know the set of possible causes.%
\footnote{
This became less true as the pandemic progressed and people started sharing (scientific and anecdotal) information about vaccination outcomes.
}

Under assumptions~(i) and~(ii), experts make strategic choices about the type of advice they provide.
Advice may be unconditional (e.g., ``do $x$'') or have conditions (e.g., ``do $x$ if $y$ happens and $z$ does not'').
Including conditions tailors the advice and makes experts appear more confident.
However, this confidence may hurt experts' reputations if their advice turns out to be wrong.
Consequently, experts may exclude relevant conditions---that is, ``simplify'' their advice---to hide their confidence and preserve their reputation.
I analyze this tension between providing better advice and managing reputational risk.

Section~\ref{sec:model-setup} embeds this tension in a simple model.
An expert (he) tells an advisee (she) whether to take an action that may be good or bad.
Both parties want the advisee to take the action if and only if it is good.
The expert may provide a binary condition under which to take the action.
This condition predicts whether the action is good if the expert is competent but not if he is incompetent.
Neither the advisee nor the expert know if he is competent.
He chooses between a ``simple rule,'' which says to take the action, and a ``complex rule,'' which says to take the action under the provided condition.
If he chooses the complex rule then the advisee learns about the expert's competence by whether the rule predicts the correct action.
If he chooses the simple rule then she learns nothing about the expert's competence.
The expert trades off the complex rule's superior predictive power with its ability to reveal information about his competence.

Section~\ref{sec:model-analysis} analyzes the expert's trade-off.
Choosing the complex rule over the simple rule induces a mean-preserving spread in the posterior belief about the expert's competence.
This spread makes him better off if his payoff is convex in the posterior and worse off if it is concave in the posterior.
In the convex case, the expert always chooses the complex rule because it is more accurate and induces an expected reputational benefit.
In the concave case, he chooses the complex rule if and only if its accuracy benefit exceeds its reputational cost.
Increasing the condition's predictive power makes the complex rule more accurate and more diagnostic of the expert's competence.
If his payoff is concave in the posterior then this increase in predictive power makes the expert more willing to choose the complex rule if and only if his reputational concerns are sufficiently weak.

Section~\ref{sec:wage-example} considers a setting in which the expert earns a wage if and only if the posterior belief about his competence exceeds a threshold.
If this threshold equals the prior belief, then the expert chooses the complex rule if and only if the wage is sufficiently low.
If the advisee wants to solicit the complex rule but cannot change the wage, then she can
(i)~lower the threshold and guarantee earning the wage, or
(ii)~raise the threshold and guarantee \emph{not} earning the wage.
Both~(i) and~(ii) remove the reputational risk that prevents the expert from choosing the complex rule.

Section~\ref{sec:extensions} considers some extensions of my analysis and Section~\ref{sec:literature} compares it to related literature.
Section~\ref{sec:conclusion} concludes.
Appendix~\ref{sec:proofs} contains proofs of my mathematical claims.


\section{Model}
\label{sec:model-setup}

\paragraph{Environment}

There are two players: an expert (he) and an advisee (she).

The expert tells the advisee whether to take an action.
The action is either good ($A=1$) or bad ($A=0$).
The expert and advisee know $\Pr(A=1)=0.5$.

The expert may provide a condition under which to take the action.
This condition predicts $A$ if the expert is competent ($\theta=1$) but not if he is incompetent ($\theta=0$).
The condition $X_\theta\in\{0,1\}$ has prevalence $\Pr(X_\theta=1)=x\in(0,1)$ and covariance $\Cov(A,X_\theta)=\theta\sigma$ with $\sigma\in(0,0.25]$.

The expert wants the advisee to take the action if and only if it is good.
He offers one of two rules $r\in\{S,C\}$:
\begin{enumerate}

    \item
    ``take the action'' (the ``simple'' rule $r=S$);

    \item
    ``take the action if and only if $X_\theta=1$'' (the ``complex'' rule $r=C$).

\end{enumerate}
Thus, he provides the condition if and only if he chooses the complex rule.
The advisee can verify the condition's prevalence and covariance if and only if the expert provides it.
If he provides it then she observes its value between receiving and implementing his advice.

Let $Y\in\{0,1\}$ indicate whether the advisee takes the ``correct'' action.
Then
\begin{align}
    \Pr(Y=1\,\vert\,r,\theta)
    &= \begin{cases}
        \Pr(A=1) & \text{if}\ r=S \\
        \Pr(A=X_\theta) & \text{if}\ r=C
    \end{cases} \notag \\
    &= \begin{cases}
        0.5 & \text{if}\ r=S \\
        0.5+2\theta\sigma & \text{if}\ r=C
    \end{cases} \label{eq:prob-Y-r-theta}
\end{align}
for each $\theta\in\{0,1\}$.

\paragraph{Beliefs}

Neither the advisee nor the expert know whether he is competent.
They have common prior $\Pr(\theta=1)=\pi_0\in(0,1)$.
Hence
\begin{align}
    \Pr(Y=1\,\vert\,r)
    &= \Pr(Y=1\,\vert\,r,\theta=1)\Pr(\theta=1)+\Pr(Y=1\,\vert\,r,\theta=0)\Pr(\theta=0) \notag \\
    &= \begin{cases}
        0.5 & \text{if}\ r=S \\
        0.5+2\sigma\pi_0 & \text{if}\ r=C.
    \end{cases} \label{eq:prob-Y-r}
\end{align}

The expert's ignorance of $\theta$ prevents him from trying to signal it.
Therefore, his chosen rule $r$ is independent of $\theta$.%
\footnote{
This remains true (in equilibrium) if the expert knows~$\theta$---see Section~\ref{sec:known-type}.
}

The players use~\eqref{eq:prob-Y-r-theta}, \eqref{eq:prob-Y-r}, and Bayes' rule to form a posterior belief
\begin{align}
    \Pr(\theta=1\,\vert\,r,Y)
    &= \frac{\Pr(Y\,\vert\,r,\theta=1)\Pr(\theta=1)}{\Pr(Y\,\vert\,r)} \notag \\
    &= \begin{cases}
        \pi_0 & \text{if}\ r=S \\
        \frac{(1+4\sigma)\pi_0}{1+4\sigma\pi_0} & \text{if}\ r=C\ \text{and}\ Y=1 \\
        \frac{(1-4\sigma)\pi_0}{1-4\sigma\pi_0} & \text{if}\ r=C\ \text{and}\ Y=0
    \end{cases} \label{eq:posterior}
\end{align}
about $\theta$ after observing the chosen rule $r$ and the outcome $Y$.

For convenience, I let
\[ \pi_1(Y)\equiv\Pr(\theta=1\,\vert\,r=C,Y) \]
denote the random posterior belief about $\theta$ when the expert chooses the complex rule.
This belief has expected value
\begin{align}
    \E[\pi_1(Y)]
    &= \pi_1(1)\Pr(Y=1\,\vert\,r=C)+\pi_1(0)\Pr(Y=0\,\vert\,r=C) \notag \\
    &= \Pr(Y=1\,\vert\,r=C,\theta=1)\pi_0+\Pr(Y=0\,\vert\,r=C,\theta=1)\pi_0 \notag \\
    &= \pi_0. \label{eq:posterior-mean}
\end{align}
Thus, choosing the complex rule leads to unchanged beliefs \emph{on average}, while choosing the simple rule leads to unchanged beliefs independently of whether the advisee takes the correct action.

\paragraph{Payoffs}

The expert's payoff has two components:
\begin{enumerate}

    \item
    A unit ``accuracy payoff'' from the advisee taking the correct action;
    
    \item
    A ``reputation payoff'' $\psi(\Pr(\theta=1\,\vert\,r,Y))$ that depends on the posterior belief~\eqref{eq:posterior} about $\theta$.

\end{enumerate}
I assume $\psi:[0,1]\to\R$ is non-decreasing but has $\psi(0)<\psi(1)$.
This function captures the expert's extrinsic incentive to be viewed as competent.
This incentive may reflect the continuation value of employment in a dynamic setting with replaceable experts (see Section~\ref{sec:wage-example}).
Alternatively, it may reflect public officials (e.g., Anthony Fauci) wanting to foster trust in government institutions (e.g., the Centres for Disease Control).

The expert's total payoff equals the sum of his accuracy and reputation payoffs.
He chooses the rule $r\in\{S,C\}$ that maximizes his expected payoff
\[ \Phi(r,\sigma,\pi_0)\equiv \begin{cases}
    0.5+\psi(\pi_0) & \text{if}\ r=S \\
    0.5+2\sigma\pi_0+\Psi(\sigma,\pi_0) & \text{if}\ r=C,
\end{cases} \]
where
\begin{align*}
    \Psi(\sigma,\pi_0)
    &\equiv \E[\psi(\pi_1(Y))] \\
    &= (0.5+2\sigma\pi_0)\psi\left(\frac{(1+4\sigma)\pi_0}{1+4\sigma\pi_0}\right)+(0.5-2\sigma\pi_0)\psi\left(\frac{(1-4\sigma)\pi_0}{1-4\sigma\pi_0}\right)
\end{align*}
is his expected reputation payoff from choosing the complex rule.
He breaks ties by choosing the complex rule.%
\footnote{
This is an arbitrary assumption.
It is not consequential for any of my results.
}

The advisee's payoff equals $Y$.
Therefore, she always follows the expert's advice.

\paragraph{Timing}
The model timing is as follows:
\begin{enumerate}

    \item
    The expert chooses the simple or complex rule.

    \item
    If the expert chooses the simple rule then the advisee takes the action.
    If he chooses the complex rule then she observes $X_\theta$ and takes the action if and only if $X_\theta=1$.

    \item
    The expert and advisee observe $Y$, and form a posterior belief~\eqref{eq:posterior} about $\theta$.

    \item
    Payoffs are realized.

\end{enumerate}

\subsection{Discussion of modeling assumptions}

\paragraph{Behavioral limitations}

I assume the advisee is boundedly rational.
Her behavioral limitation is not knowing the set of conditions that could predict whether the action is good.
This knowledge is what makes the expert an ``expert.''
However, if the expert provides a condition then I assume the advisee can verify its potential predictive power.
This prevents the expert from mis-reporting the predictive power of the condition he provides.
Thus, assuming verifiability allows me to focus on the expert's decision to provide the condition and abstract from his decision about how much uncertainty to communicate.

Intuitively, the expert in my model generates hypotheses about the relationship between conditions and actions.
Choosing the complex rule corresponds to stating a hypothesis and referencing the scientific literature on which it is based.
The advisee reads that literature and verifies that the hypothesis is ``reasonable.''
However, she cannot tell if the hypothesis is true.
That skill is unique to competent experts.

The form of bounded rationality I assume is intentionally unrealistic.
For example, one reason we outsource vaccine guidance to public health agencies, rather than have the public engage with the relevant virological science themselves, is because that science is difficult to understand.
This difficulty may lead the public to endure cognitive costs or make mistakes that agencies want to prevent.
Thus, one explanation for why experts provide ``simple'' advice is that they internalize the advisee's cognitive burden and avoid sharing ``too much detail.''
However, this explanation is unsatisfying because one can rationalize any amount of detail suppression with a suitable assumption about the advisee's cognitive costs.
In contrast, my reputation-based explanation does not appeal to cognitive costs.
It is meant to be idealized: to show that experts may simplify their advice even if the advisee has \emph{zero} cognitive costs.

Section~\ref{sec:measurement-errors} considers an extension to my model in which the advisee makes mistakes when implementing the complex rule.

\paragraph{Expert does not know $\theta$}

I assume the expert does not know if he is competent.
This allows me to abstract from competence signaling motives because the expert has no information about his competence to signal.
It makes the expert's choice of model uninformative about his competence, and isolates the tension between predictive power and reputational risk.
Section~\ref{sec:known-type} considers the implications of letting the expert know $\theta$.

\paragraph{Expert has aligned preferences}

I assume the expert is ``unbiased'' in that he wants the advisee to take the action if and only if she thinks it is good.
This assumption is appropriate in the science communication setting that motivates my analysis: experts are intrinsically ``benevolent'' but face extrinsic incentives to preserve their reputation.
It allows me to abstract from persuasion motives \cite[as studied, e.g., by][]{Kamenica-Gentzkow-2011-AER} and lie detection \cite[see, e.g.,][]{Balbuzanov-2019-IJGT}.


\section{Analysis}
\label{sec:model-analysis}

\subsection{Expert's choice}

Proposition~\ref{prop:rule-choice} characterizes the expert's choice between the simple and complex rules.
For convenience, I let
\begin{align*}
    \Delta\Phi(\sigma,\pi_0)
    &\equiv \Phi(C,\sigma,\pi_0)-\Phi(S,\sigma,\pi_0) \notag \\
    &= 2\sigma\pi_0+\Psi(\sigma,\pi_0)-\psi(\pi_0)
\end{align*}
denote the gain in the expert's expected payoff from choosing the complex rule over the simple rule.
He chooses the complex rule if and only if this gain is non-negative:

\begin{proposition}
    \label{prop:rule-choice}
    The expert chooses the complex rule if and only if
    \begin{equation}
        2\sigma\pi_0 \ge \psi(\pi_0)-\Psi(\sigma,\pi_0). \label{ineq:rule-choice}
    \end{equation}
\end{proposition}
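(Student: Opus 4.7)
The plan is to derive the inequality directly from the expert's payoff definitions, since the proposition is essentially an algebraic rearrangement of the decision rule already implicit in the model setup. The expert chooses $r \in \{S, C\}$ to maximize $\Phi(r, \sigma, \pi_0)$ and, by the stated tie-breaking convention, picks the complex rule whenever $\Phi(C, \sigma, \pi_0) \geq \Phi(S, \sigma, \pi_0)$. So the task reduces to showing that this weak inequality between the two expected payoffs is equivalent to~\eqref{ineq:rule-choice}.

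First, I would substitute the two cases of the payoff function and subtract, obtaining
\[ \Delta\Phi(\sigma, \pi_0) = \Phi(C, \sigma, \pi_0) - \Phi(S, \sigma, \pi_0) = (0.5 + 2\sigma\pi_0 + \Psi(\sigma, \pi_0)) - (0.5 + \psi(\pi_0)) = 2\sigma\pi_0 + \Psi(\sigma, \pi_0) - \psi(\pi_0), \]
exactly as in the definition of $\Delta\Phi$ preceding the statement. Then $\Delta\Phi(\sigma, \pi_0) \geq 0$ rearranges to $2\sigma\pi_0 \geq \psi(\pi_0) - \Psi(\sigma, \pi_0)$, which is~\eqref{ineq:rule-choice}.

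Finally, I would tie this back to the behavioral primitive: because the expert is an expected-payoff maximizer who breaks indifferences in favor of the complex rule, the chain ``$r = C$ iff $\Phi(C, \sigma, \pi_0) \geq \Phi(S, \sigma, \pi_0)$ iff $\Delta\Phi(\sigma, \pi_0) \geq 0$ iff~\eqref{ineq:rule-choice}'' completes the proof in both directions. There is no real obstacle here; the only subtlety worth flagging is that the weak inequality (as opposed to strict) is precisely what the tie-breaking assumption buys us, so without that convention one would need to split the statement into strict and indifferent cases.
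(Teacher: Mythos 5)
Your proof is correct and takes essentially the same route as the paper, which simply notes that the result follows from the definition of $\Delta\Phi(\sigma,\pi_0)$; you have merely spelled out the substitution, the rearrangement, and the role of the tie-breaking convention in securing the weak inequality.
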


The left-hand side of~\eqref{ineq:rule-choice} equals the gain in the expert's expected accuracy payoff from choosing the complex rule over the simple rule.
The right-hand side equals the loss in his expected reputation payoff from choosing the complex rule over the simple rule.

If $\psi$ is convex then the expert's reputation gain from advising the correct action exceeds his reputation loss from advising the incorrect action.
As a result, he chooses the complex rule because it is more accurate and offers an expected reputation \emph{benefit}:

\begin{corollary}
    \label{crly:rule-choice-convex}
    If $\psi$ is linear or convex, then the expert chooses the complex rule.
\end{corollary}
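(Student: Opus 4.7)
The plan is to apply Proposition~\ref{prop:rule-choice} and reduce everything to Jensen's inequality, using the martingale property~\eqref{eq:posterior-mean} established earlier.

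First, I would note that by Proposition~\ref{prop:rule-choice} it suffices to verify the inequality
\[ 2\sigma\pi_0 \ge \psi(\pi_0)-\Psi(\sigma,\pi_0). \]
Because $\sigma\in(0,0.25]$ and $\pi_0\in(0,1)$, the left-hand side is strictly positive, so it will suffice to show that the right-hand side is non-positive, i.e., $\Psi(\sigma,\pi_0)\ge\psi(\pi_0)$.

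Second, I would recall from~\eqref{eq:posterior-mean} that $\E[\pi_1(Y)]=\pi_0$; that is, choosing the complex rule induces a mean-preserving distribution over posteriors. Combined with the definition $\Psi(\sigma,\pi_0)=\E[\psi(\pi_1(Y))]$, Jensen's inequality applied to the convex function $\psi$ yields
\[ \Psi(\sigma,\pi_0)=\E[\psi(\pi_1(Y))]\ge \psi(\E[\pi_1(Y)])=\psi(\pi_0), \]
with equality when $\psi$ is linear. This gives $\psi(\pi_0)-\Psi(\sigma,\pi_0)\le 0\le 2\sigma\pi_0$, so~\eqref{ineq:rule-choice} holds and the expert chooses the complex rule (indeed, his tie-breaking rule handles the linear case too).

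There is no real obstacle here: the only content is (i) recognizing that the two posteriors $\pi_1(0)$ and $\pi_1(1)$ with their respective probabilities form a mean-preserving spread of $\pi_0$, which is precisely~\eqref{eq:posterior-mean}, and (ii) invoking Jensen. The accuracy term $2\sigma\pi_0$ then handles the residual slack (and is not even needed for the conclusion, only for strict preference when $\psi$ is linear on a neighborhood of the support of $\pi_1(Y)$).
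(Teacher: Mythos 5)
Your proof is correct and follows essentially the same route as the paper: invoke Proposition~\ref{prop:rule-choice}, note the left-hand side $2\sigma\pi_0$ is strictly positive, and use the martingale property~\eqref{eq:posterior-mean} with Jensen's inequality to show $\Psi(\sigma,\pi_0)\ge\psi(\pi_0)$. In fact you state Jensen's inequality in the correct direction for a convex $\psi$, whereas the paper's own proof writes the inequality as $\E[\psi(\pi_1(Y))]\le\psi(\E[\pi_1(Y)])$---an apparent typo, since its stated conclusion that the right-hand side of~\eqref{ineq:rule-choice} is non-positive requires the $\ge$ direction you give.
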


In contrast, if $\psi$ is concave then the reputation gain from advising the correct action is smaller than the reputation loss from advising the incorrect action.
As a result, choosing the complex rule incurs an expected reputation \emph{cost}.
Choosing the simple rule avoids this cost because it prevents the advisee from learning about the expert's competence.
He chooses the simple rule if this incentive to hide his competence dominates his incentive to advise the correct action.

Suppose the expert's reputation payoff comprises a wage $w\ge0$ earned with some probability $R(\pi)$ that depends on the posterior belief $\pi$ about his competence.
For example, the advisee may decide whether to re-hire the expert based on his perceived competence relative to other experts (see Section~\ref{sec:wage-example}).
Corollary~\ref{crly:rule-choice-wR} states that if $R$ is concave in $\pi$ (e.g., since competition among more competent experts is more aggressive) then the expert chooses the complex rule if and only if $w$ is sufficiently low.

\begin{corollary}
    \label{crly:rule-choice-wR}
    Let $w\ge0$ and suppose $\psi(\pi)=w\,R(\pi)$ for all beliefs $\pi\in[0,1]$, where $R:[0,1]\to\R$ is non-decreasing and concave.
    Then the expert chooses the complex rule if and only if
    \[ w\le\frac{2\sigma\pi_0}{R(\pi_0)-\E[R(\pi_1(Y))]}. \]
\end{corollary}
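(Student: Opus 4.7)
The plan is to derive the bound directly from Proposition~\ref{prop:rule-choice} by substituting the wage-based form of $\psi$ and rearranging. First I would plug $\psi(\pi) = wR(\pi)$ into the definition of $\Psi$ to obtain $\Psi(\sigma,\pi_0) = w\,\E[R(\pi_1(Y))]$. Inequality~\eqref{ineq:rule-choice} then becomes
\[ 2\sigma\pi_0 \ge w\bigl(R(\pi_0) - \E[R(\pi_1(Y))]\bigr). \]

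Next I would use the concavity of $R$ together with~\eqref{eq:posterior-mean} to justify dividing by the bracketed expression. Since $\E[\pi_1(Y)] = \pi_0$, Jensen's inequality applied to the concave $R$ yields $R(\pi_0) \ge \E[R(\pi_1(Y))]$, so the bracketed quantity is non-negative. When it is strictly positive, dividing both sides of the displayed inequality by $R(\pi_0) - \E[R(\pi_1(Y))]$ preserves the direction of the inequality and produces the stated upper bound on $w$.

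Finally I would dispatch the edge case $R(\pi_0) = \E[R(\pi_1(Y))]$, which arises when $R$ is affine on the support of $\pi_1(Y)$. In that case the displayed inequality reduces to $2\sigma\pi_0 \ge 0$, which holds for every $w \ge 0$, so the expert always chooses the complex rule; this is consistent with interpreting the right-hand side of the claimed bound as $+\infty$. This degenerate case is essentially the only subtlety; the rest is routine algebraic rearrangement of Proposition~\ref{prop:rule-choice} combined with a one-line appeal to Jensen.
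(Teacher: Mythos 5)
Your proposal is correct and follows the same route as the paper, which simply states that the result follows from inequality~\eqref{ineq:rule-choice} and the definition of $\Psi(\sigma,\pi_0)$; you have filled in exactly the substitution and rearrangement that one-line proof leaves implicit. Your attention to the sign of $R(\pi_0)-\E[R(\pi_1(Y))]$ via Jensen's inequality and to the degenerate case where it vanishes is more careful than the paper's own treatment, but it is elaboration rather than a different argument.
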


\begin{figure}
    \centering
    \includegraphics{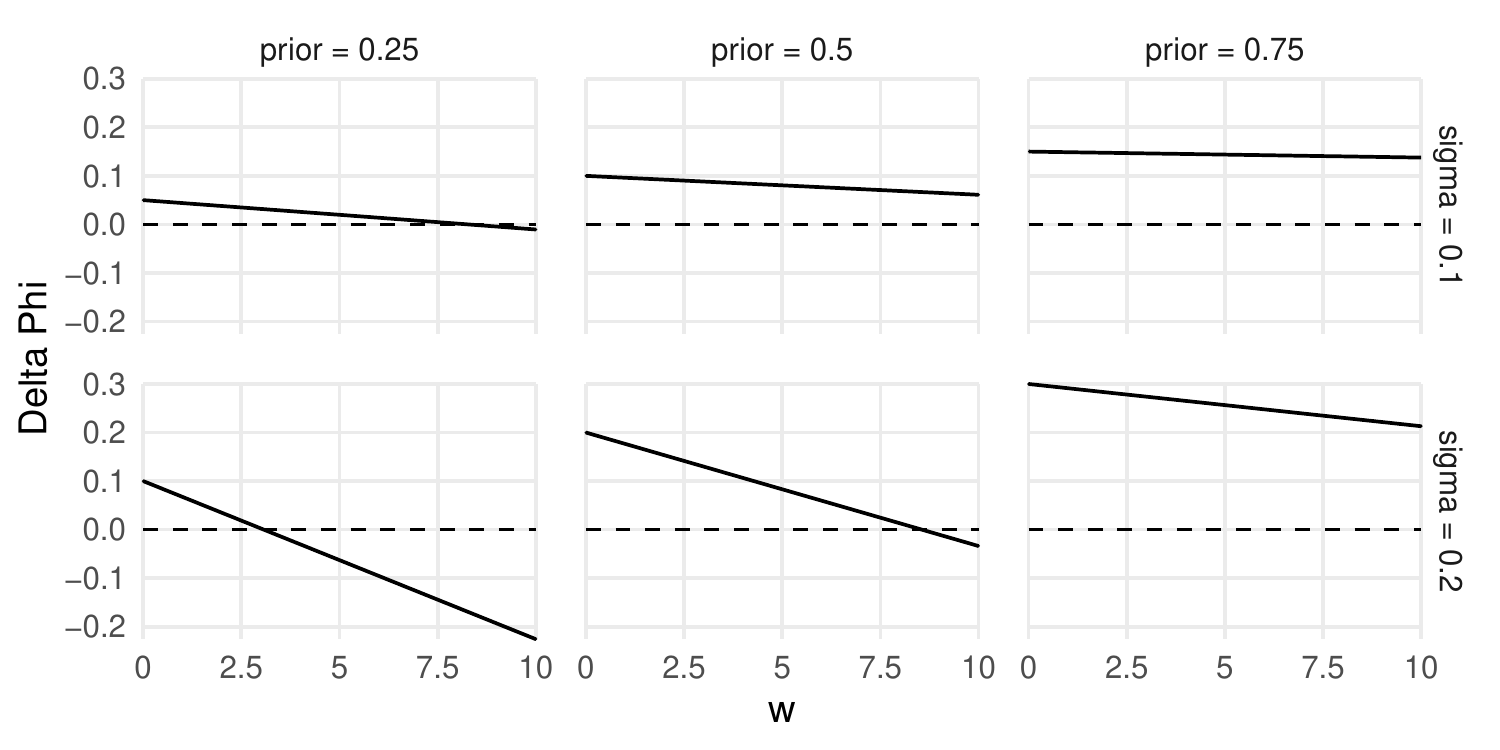}
    \caption{Values of $\Delta\Phi(\sigma,\pi_0)$ when $\psi(\pi)=w\sqrt\pi$, $\sigma\in\{0.1,0.2\}$, $\pi_0\in\{0.25,0.5,0.75\}$, and $w\in[0,10]$}
    \label{fig:sqrt-example-delta-Phi}
\end{figure}
For example, suppose $R(\pi)\equiv\sqrt{\pi}$.
Figure~\ref{fig:sqrt-example-delta-Phi} plots the corresponding value of~$\Delta\Phi(\sigma,\pi_0)$ for combinations of $\sigma\in\{0.1,0.2\}$, $\pi_0\in\{0.25,0.5,0.75\}$, and $w\in[0,10]$.
This value falls when $w$ rises because the expected reputational cost of choosing the complex rule rises.
If $\sigma=0.2$ and $\pi_0=0.25$ (i.e., competence is valuable but rare) then the complex rule is good at diagnosing the expert's probable incompetence.
Consequently, the expert chooses the complex rule if and only if $w<3.07$; that is, his reputational concerns are sufficiently \emph{weak}.
This threshold value of $w$ rises when $\sigma$ falls to 0.1, which makes competence less valuable but also makes the complex rule less diagnostic.
The threshold also rises when $\pi_0$ rises, in part because $\psi$ becomes less concave near $\pi_0$.
This fall in local concavity lowers the reputational loss from advising the incorrect action relative to the gain from advising the correct action.

In this example, the expert chooses the complex rule only if the maximum reputation payoff~$w$ is many times larger than his accuracy payoff.
This difference in payoff magnitudes is plausible if $w$ represents the continuation value of employment.
This value includes the present value of the accuracy payoffs accrued from future advisory events.
If there are many such events and the expert is patient then $w$ can be large.
For example, if he offers advice at each date $t\in\{0,1,2,\ldots,10\}$ and has inter-temporal discount factor $\delta=0.95$ then the date~0 value of his date $t\ge1$ accuracy payoffs is $\delta(1-\delta^{10})/(1-\delta)\approx7.62$ times larger than the value of his date~0 accuracy payoff.

\subsection{Changes in parameter values}

I now analyze how the posterior belief $\pi_1(Y)$ and gain in expected payoff $\Delta\Phi(\sigma,\pi_0)$ from choosing the complex rule depend on the parameters $\sigma$ and $\pi_0$.

\begin{lemma}
    \label{lem:posterior-comparative-statics}
    $\pi_1(1)$ is increasing in $\sigma$ and $\pi_0$, whereas $\pi_1(0)$ is decreasing in $\sigma$ and non-decreasing in $\pi_0$.
\end{lemma}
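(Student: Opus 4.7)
The plan is direct: both $\pi_1(1)$ and $\pi_1(0)$ are simple rational functions of $\sigma$ and $\pi_0$ via~\eqref{eq:posterior}, so I would just compute the four relevant partial derivatives and read off their signs.

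First I would handle $\pi_1(1)=(1+4\sigma)\pi_0/(1+4\sigma\pi_0)$. Applying the quotient rule with respect to $\sigma$, the terms involving $4\sigma\pi_0$ in the numerator cancel, leaving $\partial\pi_1(1)/\partial\sigma=4\pi_0(1-\pi_0)/(1+4\sigma\pi_0)^2$, which is strictly positive on $(0,1)$. Differentiating with respect to $\pi_0$ similarly yields $\partial\pi_1(1)/\partial\pi_0=(1+4\sigma)/(1+4\sigma\pi_0)^2>0$. Both monotonicity claims for $\pi_1(1)$ then follow immediately.

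Next I would do the same for $\pi_1(0)=(1-4\sigma)\pi_0/(1-4\sigma\pi_0)$. Because $\sigma\in(0,0.25]$ and $\pi_0\in(0,1)$, one has $4\sigma\pi_0<1$, so the denominator stays positive and differentiation is legitimate on the whole parameter region. A parallel quotient-rule calculation gives $\partial\pi_1(0)/\partial\sigma=-4\pi_0(1-\pi_0)/(1-4\sigma\pi_0)^2<0$ and $\partial\pi_1(0)/\partial\pi_0=(1-4\sigma)/(1-4\sigma\pi_0)^2\ge 0$.

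The only subtlety, and the one step worth flagging, is why the last inequality is weak rather than strict. At the boundary $\sigma=0.25$, the numerator $1-4\sigma$ vanishes, so $\pi_1(0)$ collapses to the constant $0$ regardless of $\pi_0$; this is exactly why Lemma~\ref{lem:posterior-comparative-statics} states that $\pi_1(0)$ is merely \emph{non-decreasing} in $\pi_0$. Otherwise there is no real obstacle: the argument is mechanical differentiation and sign-checking.
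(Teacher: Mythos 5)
Your proposal is correct and matches the paper's proof exactly: both compute the four partial derivatives, obtain the same expressions, and note that $\partial\pi_1(0)/\partial\pi_0$ vanishes precisely when $\sigma=0.25$, which is why that monotonicity claim is only weak.
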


\begin{figure}
    \centering
    \includegraphics{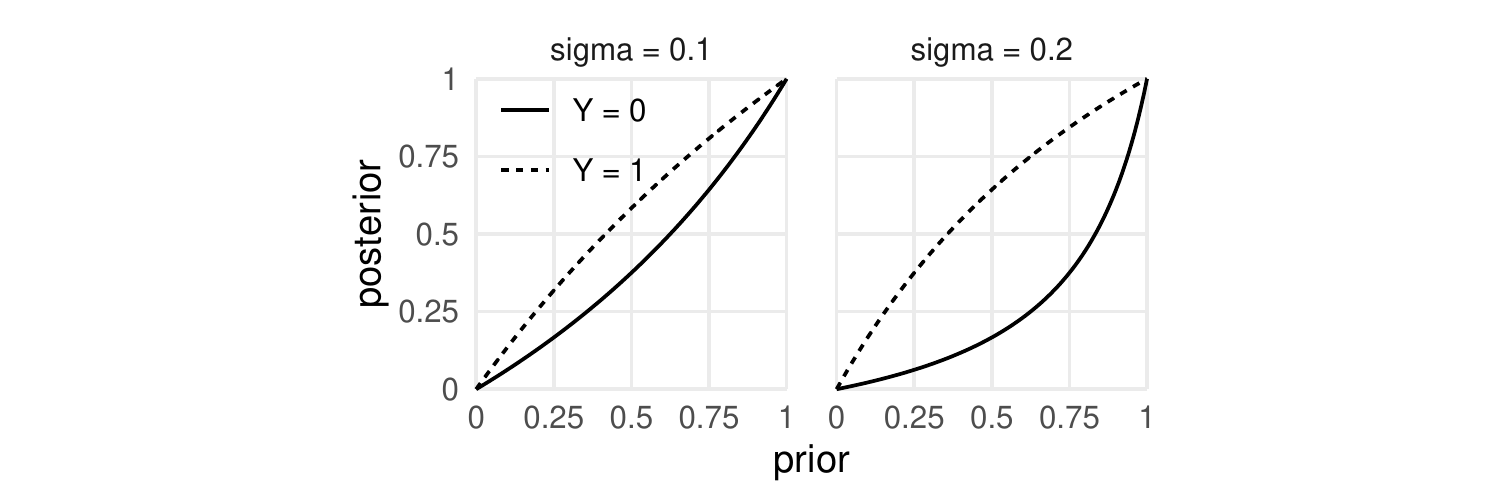}
    \caption{Posterior belief $\pi_1(Y)$ for $Y\in\{0,1\}$ and $\sigma\in\{0.1,0.2\}$}
    \label{fig:posteriors}
\end{figure}
Figure~\ref{fig:posteriors} shows how $\pi_1(1)$ and $\pi_1(0)$ grow with $\pi_0$ when $\sigma\in\{0.1,0.2\}$.
As $\sigma$ rises, the complex rule becomes less likely to lead to the incorrect action if the expert is competent.
Then observing $Y=0$ becomes more diagnostic of his \emph{in}competence if he chooses the complex rule.
Whether this dissuades him from choosing that rule depends on the shape of $\psi$:

\begin{proposition}
    \label{prop:Delta-Phi-comparative-statics-sigma}
    If $\psi$ is linear or convex, then $\Delta\Phi(\sigma,\pi_0)$ is increasing in $\sigma$.
\end{proposition}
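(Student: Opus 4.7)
The plan is to decompose
\[ \Delta\Phi(\sigma,\pi_0) = 2\sigma\pi_0 + \Psi(\sigma,\pi_0) - \psi(\pi_0) \]
and observe that the third term is independent of $\sigma$ while the first is strictly increasing in $\sigma$ (since $\pi_0>0$). It therefore suffices to show that $\Psi(\sigma,\pi_0)=\E[\psi(\pi_1(Y))]$ is non-decreasing in $\sigma$ whenever $\psi$ is linear or convex.

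The key observation is that the distribution of $\pi_1(Y)$ undergoes a mean-preserving spread as $\sigma$ grows. By equation~\eqref{eq:posterior-mean}, $\E[\pi_1(Y)]=\pi_0$ for every admissible $\sigma$, so the two-point distribution of $\pi_1(Y)$ always has mean $\pi_0$. By Lemma~\ref{lem:posterior-comparative-statics}, $\pi_1(1)$ is increasing and $\pi_1(0)$ is decreasing in $\sigma$, so for $\sigma'>\sigma$ the atoms at $\sigma'$ straddle the atoms at $\sigma$ around the common mean $\pi_0$. A standard argument then shows that the $\sigma'$-distribution is a mean-preserving spread of the $\sigma$-distribution: writing $a:=\pi_1(0)$ and $b:=\pi_1(1)$ at $\sigma$, and $a'\le a$, $b'\ge b$ at $\sigma'$, I would express each of $a,b$ as a convex combination of $a',b'$ and use convexity of $\psi$ to bound $\psi(a)$ and $\psi(b)$ above by the corresponding convex combinations of $\psi(a')$ and $\psi(b')$. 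The fixed-mean condition $\pi_0 = p\,a + (1-p)\,b = p'\,a' + (1-p')\,b'$ (with $p,p'$ the respective probabilities of $Y=0$) then makes the mixing weights line up so that $p\psi(a)+(1-p)\psi(b)\le p'\psi(a')+(1-p')\psi(b')$, i.e.\ $\Psi(\sigma,\pi_0)\le\Psi(\sigma',\pi_0)$. Combined with the strict monotonicity of $2\sigma\pi_0$, this yields the claim.

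The main obstacle is the bookkeeping in that last step: both the atoms \emph{and} the probabilities of the two-point distribution change with $\sigma$, so one cannot simply appeal to Jensen's inequality applied to a fixed distribution. The fixed-mean condition from~\eqref{eq:posterior-mean} is exactly what makes the convex-combination weights consistent with $p'$, and it is worth verifying algebraically that $p\,\tfrac{b'-a}{b'-a'}+(1-p)\,\tfrac{b'-b}{b'-a'}=\tfrac{b'-\pi_0}{b'-a'}=p'$ before concluding. An alternative, less elegant route is to differentiate $\Psi$ directly with respect to $\sigma$ and show the derivative is non-negative, but this would require assuming differentiability of $\psi$, which the paper does not impose; the mean-preserving-spread approach avoids that assumption.
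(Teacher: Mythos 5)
Your proof is correct and follows essentially the same route as the paper's: decompose $\Delta\Phi$ into the accuracy term (increasing in $\sigma$), the term $\psi(\pi_0)$ (constant in $\sigma$), and $\Psi(\sigma,\pi_0)$, then argue that increasing $\sigma$ induces a mean-preserving spread of $\pi_1(Y)$ via~\eqref{eq:posterior-mean} and Lemma~\ref{lem:posterior-comparative-statics}, which weakly raises $\E[\psi(\pi_1(Y))]$ under convexity. The only difference is that the paper closes the last step by citing Rothschild and Stiglitz's Theorem~2, whereas you verify the two-point mean-preserving-spread comparison by hand (and your weight bookkeeping $p\lambda_a+(1-p)\lambda_b=p'$ checks out).
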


Increasing the covariance $\sigma$ makes the complex rule (i)~more likely to lead to the correct action and (ii)~more diagnostic of the expert's competence.
These two forces on $\Delta\Phi(\sigma,\pi_0)$ act in the same direction when $\psi$ is convex.
In that case, the expert gains more from being seen as more competent than he loses from being seen as less competent.
Consequently, making the complex rule better at diagnosing his competence makes him more willing to choose that rule.

In contrast, forces (i) and (ii) act in \emph{opposite} directions when $\psi$ is concave.
In that case, the expert gains \emph{less} from being seen as more competent than he loses from being seen as less competent.
Consequently, making the complex rule better at diagnosing his competence makes him more willing to choose that rule if and only if his reputation concerns are sufficiently weak.
Figure~\ref{fig:sqrt-example-delta-Phi} demonstrates this condition: if $\psi(\pi)=w\sqrt{\pi}$ then $\Delta\Phi(0.2,0.5)>\Delta\Phi(0.1,0.5)$ when $w$ is small but $\Delta\Phi(0.2,0.5)<\Delta\Phi(0.1,0.5)$ when $w$ is large.

Increasing $\pi_0$ raises the probability that the complex rule has superior predictive power.
It also raises the expert's reputation payoff at each posterior belief about his competence.
However, this rise in reputation payoffs applies to the complex rule \emph{and} the simple model.
Consequently, the overall effect on the expert's willingness to choose the complex rule depends on how changing $\pi_0$ changes his reputation payoffs under each rule.
Proposition~\ref{prop:Delta-Phi-comparative-statics-prior-differentiable} characterizes this overall effect.

\begin{proposition}
    \label{prop:Delta-Phi-comparative-statics-prior-differentiable}
    Suppose $\psi$ is differentiable at $\pi_0$, $\pi_1(1)$, and $\pi_1(0)$.
    Then $\Delta\Phi(\sigma,\pi_0)$ is increasing in $\pi_0$ if and only if
    \begin{equation}
        2\sigma+2\sigma\left[\psi(\pi_1(1))-\psi(\pi_1(0))\right]+\E\left[\psi'(\pi_1(Y))\parfrac{\pi_1(Y)}{\pi_0}\right]>\psi'(\pi_0), \label{ineq:Delta-Phi-comparative-statics-prior-differentiable}
    \end{equation}
    where $\psi'$ denotes the derivative of $\psi$.
\end{proposition}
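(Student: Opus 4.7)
The plan is to compute $\partial\Delta\Phi/\partial\pi_0$ directly and show that it is positive if and only if the stated inequality holds. Since $\Delta\Phi(\sigma,\pi_0)=2\sigma\pi_0+\Psi(\sigma,\pi_0)-\psi(\pi_0)$, the question reduces to differentiating each of the three summands.

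First I would handle the easy pieces: the derivative of $2\sigma\pi_0$ with respect to $\pi_0$ is $2\sigma$, and the differentiability of $\psi$ at $\pi_0$ makes the last summand contribute $-\psi'(\pi_0)$. Then I would turn to $\Psi(\sigma,\pi_0)$, writing it in the explicit form
\[\Psi(\sigma,\pi_0)=(0.5+2\sigma\pi_0)\,\psi(\pi_1(1))+(0.5-2\sigma\pi_0)\,\psi(\pi_1(0)),\]
where both the Bernoulli-style weights $\Pr(Y=y\mid r=C)$ and the posteriors $\pi_1(y)$ are functions of $\pi_0$. Applying the product rule and using differentiability of $\psi$ at $\pi_1(1)$ and $\pi_1(0)$, the weight-derivatives contribute $2\sigma[\psi(\pi_1(1))-\psi(\pi_1(0))]$, while the argument-derivatives contribute
\[(0.5+2\sigma\pi_0)\,\psi'(\pi_1(1))\,\frac{\partial\pi_1(1)}{\partial\pi_0}+(0.5-2\sigma\pi_0)\,\psi'(\pi_1(0))\,\frac{\partial\pi_1(0)}{\partial\pi_0},\]
which is exactly $\E[\psi'(\pi_1(Y))\,\partial\pi_1(Y)/\partial\pi_0]$ because the two coefficients are $\Pr(Y=1\mid r=C)$ and $\Pr(Y=0\mid r=C)$ respectively.

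Summing the three pieces gives
\[\parfrac{\Delta\Phi(\sigma,\pi_0)}{\pi_0}=2\sigma+2\sigma\left[\psi(\pi_1(1))-\psi(\pi_1(0))\right]+\E\!\left[\psi'(\pi_1(Y))\parfrac{\pi_1(Y)}{\pi_0}\right]-\psi'(\pi_0),\]
and this is strictly positive if and only if~\eqref{ineq:Delta-Phi-comparative-statics-prior-differentiable} holds, which yields the proposition.

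There is no real obstacle here beyond bookkeeping: the only subtlety is recognizing that the product rule on $\Psi$ naturally splits into a ``reputation-level'' term (from differentiating the Bernoulli weights) and an ``expected marginal'' term (from differentiating the posteriors themselves), and that the latter combines cleanly into the single expectation written in the statement because the weights that arise are precisely the probabilities of $Y=1$ and $Y=0$ under the complex rule. Differentiability of $\psi$ at the three points $\pi_0$, $\pi_1(1)$, $\pi_1(0)$ is exactly what is needed to justify each derivative; the formulas in Lemma~\ref{lem:posterior-comparative-statics} guarantee the required derivatives $\partial\pi_1(y)/\partial\pi_0$ exist.
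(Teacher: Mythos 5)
Your proposal is correct and follows essentially the same route as the paper: differentiate $\Delta\Phi(\sigma,\pi_0)=2\sigma\pi_0+\Psi(\sigma,\pi_0)-\psi(\pi_0)$ term by term, apply the product rule to $\Psi$ so that the Bernoulli-weight derivatives give $2\sigma[\psi(\pi_1(1))-\psi(\pi_1(0))]$ and the posterior derivatives combine into $\E[\psi'(\pi_1(Y))\,\partial\pi_1(Y)/\partial\pi_0]$, then read off the sign condition. Nothing is missing.
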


The left-hand side of~\eqref{ineq:Delta-Phi-comparative-statics-prior-differentiable} decomposes the effect of increasing $\pi_0$ on the expert's expected payoff from choosing the complex rule.
The first term equals the increase in his expected accuracy payoff.
The second term equals the increase in his expected reputational payoff from being more likely to be competent.
The third term equals the increase in his expected reputation payoff induced by the increase in the posteriors $\pi_1(1)$ and $\pi_1(0)$.
The right-hand side of~\eqref{ineq:Delta-Phi-comparative-statics-prior-differentiable} equals the increase in the reputation payoff from choosing the simple rule.
Increasing $\pi_0$ increases $\Delta\Phi(\sigma,\pi_0)$ if and only if the sum of the three terms on the left-hand side of~\eqref{ineq:Delta-Phi-comparative-statics-prior-differentiable} exceeds the term on the right-hand side.

Proposition~\ref{prop:Delta-Phi-comparative-statics-prior-wR} analyzes the class of reputation payoff functions $\psi(\pi)=w\,R(\pi)$ considered in Corollary~\ref{crly:rule-choice-wR}, replacing the concavity restriction with a differentiability restriction.

\begin{proposition}
    \label{prop:Delta-Phi-comparative-statics-prior-wR}
    Let $w\ge0$ and suppose $\psi(\pi)=w\,R(\pi)$ for all beliefs $\pi\in[0,1]$, where $R:[0,1]\to\R$ is non-decreasing and differentiable.
    Then either
    \begin{enumerate}

        \item
        $\Delta\Phi(\sigma,\pi_0)$ is increasing in $\pi_0$, or

        \item
        There exists $w^*>0$ such that $\Delta\Phi(\sigma,\pi_0)$ is increasing in $\pi_0$ if and only if $w<w^*$.

    \end{enumerate}
\end{proposition}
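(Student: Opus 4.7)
The plan is to apply Proposition~\ref{prop:Delta-Phi-comparative-statics-prior-differentiable} to $\psi(\pi)=wR(\pi)$ and observe that the resulting inequality is affine in $w$. Since $R$ is differentiable on $[0,1]$, so is $\psi$, with $\psi'(\pi)=wR'(\pi)$, so the hypothesis of Proposition~\ref{prop:Delta-Phi-comparative-statics-prior-differentiable} is satisfied at $\pi_0$, $\pi_1(1)$, and $\pi_1(0)$. Substituting into~\eqref{ineq:Delta-Phi-comparative-statics-prior-differentiable} and collecting the common factor of $w$ from the $R$-terms rewrites the condition as
\[ 2\sigma + w\,T(\sigma,\pi_0) > 0, \]
where
\[ T(\sigma,\pi_0) \equiv 2\sigma\bigl[R(\pi_1(1))-R(\pi_1(0))\bigr] + \E\!\left[R'(\pi_1(Y))\parfrac{\pi_1(Y)}{\pi_0}\right] - R'(\pi_0) \]
depends on $\sigma$, $\pi_0$, and $R$ but not on $w$.

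The remainder of the argument is a case split on the sign of $T(\sigma,\pi_0)$. If $T(\sigma,\pi_0)\ge 0$, then $2\sigma + wT(\sigma,\pi_0)\ge 2\sigma>0$ for every $w\ge 0$, so $\Delta\Phi(\sigma,\pi_0)$ is increasing in $\pi_0$ for every admissible $w$; this is case~(1). If instead $T(\sigma,\pi_0)<0$, then $2\sigma + wT(\sigma,\pi_0)>0$ rearranges to $w<-2\sigma/T(\sigma,\pi_0)$, so setting $w^*\equiv -2\sigma/T(\sigma,\pi_0)>0$ yields case~(2) exactly. The strict positivity of the intercept $2\sigma$ ensures that $w^*$ is strictly positive whenever it is defined, matching the proposition's claim that $w^*>0$.

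There is no real obstacle beyond bookkeeping; the content of the proof is the observation that restricting $\psi$ to the one-parameter family $\{wR:w\ge 0\}$ collapses the functional inequality of Proposition~\ref{prop:Delta-Phi-comparative-statics-prior-differentiable} into an affine inequality in the scalar $w$, whose solution set in $[0,\infty)$ must be either the whole interval or an initial half-open interval $[0,w^*)$. The only subtlety to record is that the coefficient on $w$ is exactly $T(\sigma,\pi_0)$ (so that its sign alone determines which of the two cases occurs) and that the intercept $2\sigma>0$ rules out degenerate cases such as $w^*=0$.
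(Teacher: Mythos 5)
Your proof is correct and takes essentially the same route as the paper: both substitute $\psi=wR$ into the derivative formula from Proposition~\ref{prop:Delta-Phi-comparative-statics-prior-differentiable}, observe that $\partial\Delta\Phi/\partial\pi_0 = 2\sigma + wT(\sigma,\pi_0)$ is affine in $w$ with strictly positive intercept, and split on the sign of the coefficient. Your $w^*=-2\sigma/T(\sigma,\pi_0)$ is exactly the paper's threshold written with the denominator's sign flipped.
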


Intuitively, increasing $\pi_0$ raises the payoff from choosing the simple rule by a \emph{certain} amount but the payoff from choosing the complex rule by an \emph{uncertain} amount.
If $R$ is ``not too concave'' or the stakes are sufficiently low (i.e., $w$ is small) then the uncertain amount is more attractive.


\section{Wages and replacement thresholds}
\label{sec:wage-example}

This section considers a setting in which the expert earns wage $w>0$ if and only if the posterior belief about his competence exceeds a threshold $\pi^*\in(0,1)$.
This setting arises, for example, when the advisee faces an action in each of two periods, and she hires the expert to provide advice in the second period if and only if she believes he is more likely to be competent than another expert drawn from the market.
Then the expert's reputation payoff is given by
\begin{equation}
    \psi(\pi)=\begin{cases}
        w & \text{if}\ \pi\ge\pi^* \\
        0 & \text{otherwise}
    \end{cases} \label{eq:wage-example-psi}
\end{equation}
for all beliefs $\pi\in[0,1]$.
Lemma~\ref{lem:wage-example-expected-payoff} defines the expert's expected payoffs from choosing the simple and complex rules in this setting.

\begin{lemma}
    \label{lem:wage-example-expected-payoff}
    Suppose the reputation payoff function $\psi$ is defined as in~\eqref{eq:wage-example-psi}.
    Then the expert's expected payoff from choosing the complex rule equals
    \begin{equation}
            \Phi(C,\sigma,\pi_0)=\begin{cases}
            0.5+2\sigma\pi_0 & \text{if}\ \pi_0<\underline\pi \\
            0.5+2\sigma\pi_0+w(0.5+2\sigma\pi_0) & \text{if}\ \underline\pi\le\pi_0<\overline\pi \\
            0.5+2\sigma\pi_0+w & \text{if}\ \overline\pi\le\pi_0,
        \end{cases} \label{eq:wage-example-expected-payoff-complex}
    \end{equation}
    where
    \begin{equation}
        \underline\pi\equiv\frac{\pi^*}{1+4\sigma(1-\pi^*)} \label{eq:wage-example-lower-threshold}
    \end{equation}
    and
    \begin{equation}
        \overline\pi\equiv\frac{\pi^*}{1-4\sigma(1-\pi^*)}. \label{eq:wage-example-upper-threshold}
    \end{equation}
    His expected payoff from choosing the simple rule equals
    \begin{equation}
        \Phi(S,\sigma,\pi_0)=\begin{cases}
            0.5+w & \text{if}\ \pi_0\ge\pi^* \\
            0.5 & \text{otherwise}.
        \end{cases} \label{eq:wage-example-expected-payoff-simple}
    \end{equation}
\end{lemma}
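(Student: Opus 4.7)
The plan is to evaluate the expected payoffs $\Phi(S,\sigma,\pi_0)$ and $\Phi(C,\sigma,\pi_0)$ directly from the definitions, using the fact that $\psi$ is a step function that takes value $w$ if and only if the posterior belief weakly exceeds the threshold $\pi^*$. The simple-rule case is immediate: the posterior is $\pi_0$, so $\psi(\pi_0)=w$ if $\pi_0\ge\pi^*$ and $\psi(\pi_0)=0$ otherwise, which yields~\eqref{eq:wage-example-expected-payoff-simple}.

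For the complex rule, I would first reduce the problem to determining, as a function of $\pi_0$, whether each of $\pi_1(1)$ and $\pi_1(0)$ lies above or below $\pi^*$. Using the expressions for $\pi_1(Y)$ from~\eqref{eq:posterior}, the inequality $\pi_1(1)\ge\pi^*$ becomes $(1+4\sigma)\pi_0\ge\pi^*(1+4\sigma\pi_0)$, which rearranges to $\pi_0\ge\underline\pi$ with $\underline\pi$ as in~\eqref{eq:wage-example-lower-threshold}. Similarly, the inequality $\pi_1(0)\ge\pi^*$ becomes $(1-4\sigma)\pi_0\ge\pi^*(1-4\sigma\pi_0)$, which rearranges to $\pi_0\ge\overline\pi$ with $\overline\pi$ as in~\eqref{eq:wage-example-upper-threshold}. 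Here I need to observe that $1-4\sigma(1-\pi^*)>0$, since $\sigma\le 0.25$ and $\pi^*>0$ imply $4\sigma(1-\pi^*)<1$; this guarantees that sign-preserving manipulation is valid and that $\overline\pi>0$. I also note that $\underline\pi\le\overline\pi$ since the two denominators differ only in the sign of $4\sigma(1-\pi^*)\ge0$.

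With these thresholds identified, I split into three cases based on the position of $\pi_0$ relative to $\underline\pi$ and $\overline\pi$, and substitute into
\[ \Phi(C,\sigma,\pi_0)=0.5+2\sigma\pi_0+(0.5+2\sigma\pi_0)\psi(\pi_1(1))+(0.5-2\sigma\pi_0)\psi(\pi_1(0)). \]
If $\pi_0<\underline\pi$, both reputation terms vanish, giving the first branch of~\eqref{eq:wage-example-expected-payoff-complex}. If $\underline\pi\le\pi_0<\overline\pi$, the $Y=1$ term contributes $w(0.5+2\sigma\pi_0)$ and the $Y=0$ term vanishes, giving the second branch. If $\pi_0\ge\overline\pi$, both terms are positive and sum to $w$, giving the third branch.

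No step is genuinely hard; the only place where care is required is the algebraic rearrangement to obtain $\underline\pi$ and $\overline\pi$, together with the sign check on the denominator $1-4\sigma(1-\pi^*)$ so that the direction of the second inequality is preserved. Once those are in hand, the case split mechanically produces the three branches of~\eqref{eq:wage-example-expected-payoff-complex}.
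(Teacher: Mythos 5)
Your proposal is correct and follows essentially the same route as the paper's proof: both identify $\underline\pi$ and $\overline\pi$ as the values of $\pi_0$ at which $\pi_1(1)$ and $\pi_1(0)$ respectively cross $\pi^*$, and then read off the three branches of~\eqref{eq:wage-example-expected-payoff-complex} from the resulting case split. Your explicit check that $1-4\sigma(1-\pi^*)>0$ (so the rearrangement for $\overline\pi$ preserves the inequality's direction) is a small point of added care that the paper leaves implicit.
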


Suppose the expert chooses the simple rule.
Then the posterior belief about his competence equals the prior belief because no information about his competence is revealed.
Consequently, he receives the wage $w$ if and only if his prior exceeds the threshold $\pi^*$.

Now suppose the expert chooses the complex rule.
If $\pi_0<\underline\pi$ then the expert is guaranteed a reputation payoff of zero because the posterior belief $\pi_1(1)$ induced by the advisee taking the correct action is less than $\pi^*$.
Conversely, if $\pi_0\ge\underline\pi$ then the expert is guaranteed a reputation payoff of $w$ because the posterior belief $\pi_1(0)$ induced by the advisee taking the \emph{incorrect} action is greater than $\pi^*$.
If $\pi_0\in[\underline\pi,\overline\pi)$ then the expert receives a positive reputation payoff if and only if the advisee takes the correct action.

\begin{figure}
    \centering
    \includegraphics{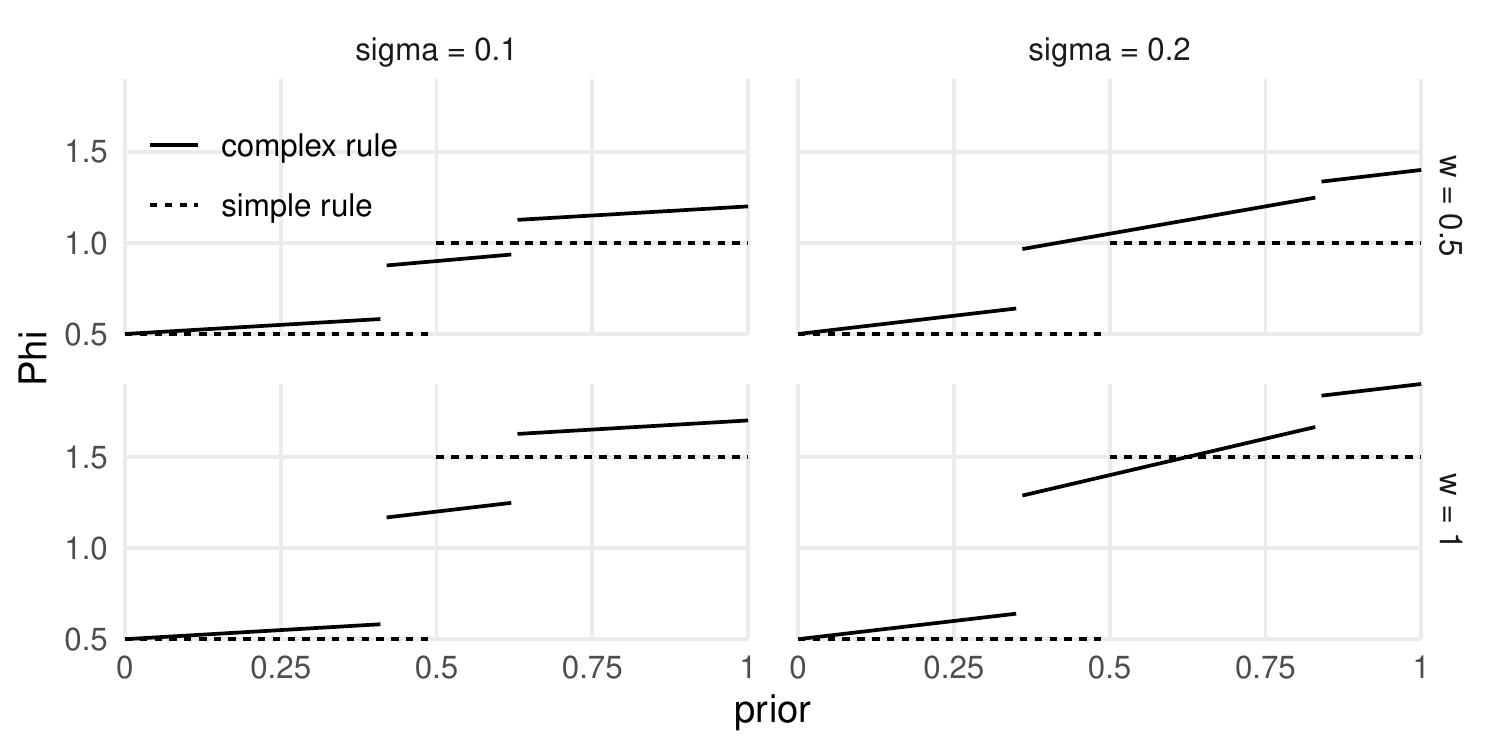}
    \caption{Graphs of $\Phi(C,\sigma,\pi_0)$ and $\Phi(S,\sigma,\pi_0)$ with $\psi$ defined as in~\eqref{eq:wage-example-psi} and with $\sigma\in\{0,1\}$, $w\in\{0.5,1\}$, and $\pi^*=0.5$}
    \label{fig:wage-example}
\end{figure}
Figure~\ref{fig:wage-example} plots the expected payoffs~\eqref{eq:wage-example-expected-payoff-complex} and~\eqref{eq:wage-example-expected-payoff-simple} against $\pi_0$ when $\sigma\in\{0.1,0.2\}$, $w\in\{0.5,1\}$, and $\pi^*=0.5$.
It demonstrates three cases:
\begin{enumerate}

    \item
    $\Phi(C,\sigma,\pi_0)$ exceeds $\Phi(S,\sigma,\pi_0)$ for all $\pi_0$ (as when, e.g., $(\sigma,w)=(0.2,0.5)$).

    \item
    $\Phi(C,\sigma,\pi_0)$ and $\Phi(S,\sigma,\pi_0)$ ``cross'' at the replacement threshold $\pi_0=\pi^*$ and the discontinuity $\pi_0=\overline\pi$ (as when, e.g., $(\sigma,w)=(0.1,0.5)$);

    \item
    $\Phi(C,\sigma,\pi_0)$ and $\Phi(S,\sigma,\pi_0)$ intersect at some $\pi_0\in[\underline\pi,\overline\pi)$ (as when, e.g., $(\sigma,w)=(0.2,1)$).

\end{enumerate}
Proposition~\ref{prop:wage-example-choice} states that these are the \emph{only} three possible cases.%
\footnote{
Corollary~\ref{crly:rule-choice-wR} does not apply in this setting because $\psi$ is not concave.
}

\begin{proposition}
    \label{prop:wage-example-choice}
    Suppose the reputation payoff function $\psi$ is defined as in~\eqref{eq:wage-example-psi}.
    Define
    \begin{equation}
        \pi^\dagger\equiv\frac{w}{4\sigma(1+w)}. \label{eq:wage-example-prior-dagger}
    \end{equation}
    If $\pi^\dagger\le\pi^*$ then the expert chooses the complex rule.
    If $\pi^\dagger>\pi^*$ then he chooses the \emph{simple} rule if and only if $\pi^*\le\pi_0<\min\{\pi^\dagger,\overline\pi\}$, where $\overline\pi$ is defined as in Lemma~\ref{lem:wage-example-expected-payoff}.
\end{proposition}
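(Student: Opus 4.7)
The plan is to combine the formulas from Lemma~\ref{lem:wage-example-expected-payoff} into the difference $\Delta\Phi(\sigma,\pi_0)=\Phi(C,\sigma,\pi_0)-\Phi(S,\sigma,\pi_0)$ and do a case analysis partitioning $[0,1]$ by the three thresholds $\underline\pi$, $\pi^*$, and $\overline\pi$. The first preliminary step is to observe that $\underline\pi<\pi^*<\overline\pi$: this follows directly from the definitions~\eqref{eq:wage-example-lower-threshold}--\eqref{eq:wage-example-upper-threshold} because $4\sigma(1-\pi^*)>0$. Thus the four regions to consider are $\pi_0<\underline\pi$, $\underline\pi\le\pi_0<\pi^*$, $\pi^*\le\pi_0<\overline\pi$, and $\pi_0\ge\overline\pi$, and in each one both $\Phi(C,\sigma,\pi_0)$ and $\Phi(S,\sigma,\pi_0)$ have unambiguous closed-form expressions from~\eqref{eq:wage-example-expected-payoff-complex} and~\eqref{eq:wage-example-expected-payoff-simple}.

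Next I would compute $\Delta\Phi(\sigma,\pi_0)$ in each of the four regions. In the first two regions $\pi_0<\pi^*$, so $\Phi(S,\sigma,\pi_0)=0.5$ and $\Delta\Phi$ equals either $2\sigma\pi_0$ or $2\sigma\pi_0+w(0.5+2\sigma\pi_0)$, both of which are non-negative; hence the expert chooses the complex rule there. In the fourth region $\Phi(S,\sigma,\pi_0)=0.5+w$ and $\Phi(C,\sigma,\pi_0)=0.5+2\sigma\pi_0+w$, giving $\Delta\Phi=2\sigma\pi_0>0$; again the complex rule is chosen. The only region where the choice is not immediate is the third, $\pi^*\le\pi_0<\overline\pi$, in which
\[ \Delta\Phi(\sigma,\pi_0)=(0.5+2\sigma\pi_0)(1+w)-(0.5+w)=2\sigma\pi_0(1+w)-0.5\,w. \]

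Solving $\Delta\Phi(\sigma,\pi_0)\ge 0$ for $\pi_0$ in this region yields exactly $\pi_0\ge w/[4\sigma(1+w)]=\pi^\dagger$. Because the expert breaks ties in favour of the complex rule, the simple rule is chosen in the third region if and only if $\pi_0<\pi^\dagger$. Combining this with the results for the other three regions, the simple rule is chosen if and only if $\pi^*\le\pi_0<\overline\pi$ and $\pi_0<\pi^\dagger$, i.e.\ if and only if $\pi^*\le\pi_0<\min\{\pi^\dagger,\overline\pi\}$. The set of such $\pi_0$ is empty precisely when $\pi^\dagger\le\pi^*$, yielding both halves of the proposition.

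The case analysis is entirely routine once the ordering $\underline\pi<\pi^*<\overline\pi$ is established, so the main (minor) obstacle is simply algebraic bookkeeping in the third region to isolate $\pi^\dagger$. No appeal to convexity of $\psi$ is needed because $\psi$ is a step function, and all the relevant information has already been packaged by Lemma~\ref{lem:wage-example-expected-payoff}.
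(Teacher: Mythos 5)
Your proof is correct and follows essentially the same route as the paper: restrict attention to the region $\pi^*\le\pi_0<\overline\pi$ (the only place the simple rule can win) and solve $(0.5+2\sigma\pi_0)(1+w)\ge 0.5+w$ to obtain the crossing point $\pi^\dagger$. Your explicit four-region computation of $\Delta\Phi$ is, if anything, a cleaner justification of the preliminary step than the paper's own inequality chain for the case $\pi_0<\pi^*$.
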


If the wage $w$ is sufficiently low or the threshold $\pi^*$ is sufficiently high, then the expert receives a greater expected payoff from choosing the complex rule regardless of the prior $\pi_0$.
If $w$ is high and the prior lies between $\pi^*$ and $\min\{\pi^\dagger,\overline\pi\}$, then the reputational risk induced by choosing the complex rule dissuades the expert from making that choice.

If $\pi_0$ is the proportion of employable experts who are competent and the advisee replaces the expert with a random competitor, then she maximizes the probability of employing a competent expert by setting $\pi^*=\pi_0$.
In that case, if she wants the expert to choose the complex model then she must limit his reputational risk by offering a low wage:

\begin{corollary}
    \label{crly:wage-example-wage-threshold}
    Suppose the reputation payoff function $\psi$ is defined as in~\eqref{eq:wage-example-psi}.
    If $\pi^*=\pi_0$ then the expert chooses the complex rule if and only if $w\le4\sigma\pi_0/(1-4\sigma\pi_0)$.
\end{corollary}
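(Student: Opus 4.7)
The plan is to invoke Proposition~\ref{prop:wage-example-choice} with $\pi^*=\pi_0$ and reduce the stated inequality to the condition $\pi^\dagger\le\pi^*$. Proposition~\ref{prop:wage-example-choice} tells us that the expert chooses the complex rule whenever $\pi^\dagger\le\pi^*$, and that when $\pi^\dagger>\pi^*$ he chooses the simple rule exactly on $[\pi^*,\min\{\pi^\dagger,\overline\pi\})$. So with $\pi^*=\pi_0$, the plan is (i) to translate $\pi^\dagger\le\pi_0$ into the stated wage bound, and (ii) to show that in the complementary case $\pi^\dagger>\pi_0$ the prior $\pi_0$ necessarily lies in the simple-rule interval, so the ``only if'' direction follows.

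For step (i), I would take the definition $\pi^\dagger=w/(4\sigma(1+w))$ from~\eqref{eq:wage-example-prior-dagger} and rearrange $\pi^\dagger\le\pi_0$ as $w\le 4\sigma\pi_0(1+w)$, i.e.\ $w(1-4\sigma\pi_0)\le 4\sigma\pi_0$. Since $\sigma\in(0,0.25]$ and $\pi_0\in(0,1)$, we have $4\sigma\pi_0<1$, so I can divide by $1-4\sigma\pi_0$ without flipping the inequality and obtain exactly $w\le 4\sigma\pi_0/(1-4\sigma\pi_0)$.

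For step (ii), I would verify, under $\pi^*=\pi_0$, that $\pi_0\in[\pi^*,\min\{\pi^\dagger,\overline\pi\})$. The left endpoint is immediate from $\pi_0=\pi^*$; the strict inequality $\pi_0<\pi^\dagger$ is just the assumption $\pi^\dagger>\pi^*$. The remaining check is $\pi_0<\overline\pi$, for which I use~\eqref{eq:wage-example-upper-threshold}: $\overline\pi=\pi_0/(1-4\sigma(1-\pi_0))$. Because $\sigma\le 0.25$ and $\pi_0<1$, the denominator satisfies $0<1-4\sigma(1-\pi_0)<1$, so $\overline\pi>\pi_0$. Hence the simple-rule interval is entered whenever $\pi^\dagger>\pi^*=\pi_0$, which by step~(i) is equivalent to $w>4\sigma\pi_0/(1-4\sigma\pi_0)$.

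There isn't really a serious obstacle here; the result is essentially bookkeeping applied to Proposition~\ref{prop:wage-example-choice}. The only point that needs care is confirming the strict inequality $\pi_0<\overline\pi$ using the parameter restrictions $\sigma\in(0,0.25]$ and $\pi_0\in(0,1)$, which ensures that setting $\pi^*=\pi_0$ always places the prior strictly below the upper kink $\overline\pi$ of $\Phi(C,\sigma,\cdot)$, so the switch from complex to simple occurs precisely at the wage threshold derived in step~(i).
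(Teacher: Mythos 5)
Your proof is correct and takes essentially the same route as the paper's: the paper also reduces the claim to checking that $\overline\pi>\pi_0$ when $\pi^*=\pi_0$, so that Proposition~\ref{prop:wage-example-choice} makes the complex rule optimal exactly when $\pi^\dagger\le\pi_0$, and then rearranges $\pi^\dagger\le\pi_0$ into the stated wage bound. Your version just spells out the bookkeeping (the division by $1-4\sigma\pi_0$ and the verification that $\pi_0$ lands in the simple-rule interval when $\pi^\dagger>\pi_0$) that the paper leaves implicit.
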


Alternatively, the wage may be fixed (due to, e.g., legal or institutional constraints).
Then the advisee's only way to solicit the complex rule is to vary the replacement threshold $\pi^*$.
She has two choices: set it low enough to guarantee that the expert receives the wage, or set it \emph{high} enough to guarantee that expert \emph{does not} receive the wage.
Both choices remove the expert's reputational risk.
The first choice removes the risk by fully insuring it.
The second choice removes the risk by making explicit that the advisee seeks the expert's advice as a ``one-off,'' effectively imposing a single-period term limit on the expert's employment.


\section{Extensions}
\label{sec:extensions}

This section considers several standalone extensions to my analysis in Sections~\ref{sec:model-setup} and~\ref{sec:model-analysis}.

\subsection{$\Pr(A=1)\not=0.5$}
\label{sec:prob-A}

Sections~\ref{sec:model-setup} and~\ref{sec:model-analysis} assume $\Pr(A=1)=0.5$.
This equates the simple and complex rules' expected accuracy payoffs when the expert is incompetent.
Consequently, it isolates the tension between the complex rule's superior predictive power and its ability to reveal the expert's competence.
If $\Pr(A=1)=a\in(0,1)$ then
\[ \Pr(A=X_\theta)=ax+(1-a)(1-x)+2\theta\sigma \]
and the condition~\eqref{ineq:rule-choice} under which the expert chooses the complex rule becomes
\begin{equation}
    2\sigma\pi_0-(2a-1)(1-x)\ge\psi(\pi_0)-\Psi(\sigma,\pi_0). \label{eq:rule-choice-a}
\end{equation}
The left-hand side of~\eqref{ineq:rule-choice} exceeds the left-hand side of~\eqref{eq:rule-choice-a} if and only if $a>0.5$.
In that case, conditioning on $X_0$ lowers the expert's accuracy payoff because it lowers the probability that the advisee takes the action when it is good.
Conditioning on $X_1$ raises the expert's accuracy payoff if and only if the covariance $\sigma$ of $A$ and $X_1$ is sufficiently high.
This covariance depends on $a$ since it depends on the variance $\Var(A)=a(1-a)$.
Therefore, increasing $a$ changes both terms on the left-hand side of~\eqref{eq:rule-choice-a}.%
\footnote{
If $A$ and $X_1$ have correlation $\rho$ then $\sigma=\rho\sqrt{a(1-a)x(1-x)}$.
Then the left-hand side of~\eqref{eq:rule-choice-a} is increasing in $a$ if and only if
\[ \rho\pi_0(1-2a)>2(1-x)\sqrt{\frac{a(1-a)}{x(1-x)}}. \]
If $a\ge0.5$ then this condition never holds.
If $a<0.5$ then it holds if and only if $\rho\pi_0$ is sufficiently large.
}
Increasing $a$ also changes the right-hand side by changing the distribution of posterior beliefs $\pi_1(Y)$ and, in turn, the expected reputation payoff $\Psi(\sigma,\pi_0)$.
The overall effect of increasing $a$ on $\Delta\Phi(\sigma,\pi_0)$ is ambiguous; it depends on the parameters $x$, $\sigma$ and $\pi_0$, and the shape of $\psi$.
I do not elaborate on this dependency because it is not central to my analysis.

\subsection{Measurement errors}
\label{sec:measurement-errors}

One reason real-world experts may prefer the simple rule is that the advisee may make mistakes when implementing the complex rule.
For example, suppose the expert provides condition $X_\theta$ but the advisee observes $\hat{X}_\theta\in\{0,1\}$ with $\Pr(\hat{X}_\theta\not=X_\theta)=\epsilon\in(0,0.5]$.
Proposition~\ref{prop:measurement-errors-complex-accuracy} states that such measurement errors decrease the complex rule's accuracy.%
\footnote{
If $\Pr(A=1)\not=0.5$ then whether measurement errors decrease the complex rule's accuracy depends on the joint distribution of $A$ and $X_\theta$, and on the relative rates of false positives and negatives.
}

\begin{proposition}
    \label{prop:measurement-errors-complex-accuracy}
    Let $\epsilon\in[0,0.5]$.
    Suppose $\hat{X}_\theta\in\{0,1\}$ is distributed such that $\Pr(\hat{X}_\theta\not=X_\theta)=\epsilon$ for each $\theta\in\{0,1\}$.
    Then 
    \begin{equation}
        \Pr(A=\hat{X}_\theta)=0.5+2\theta(1-2\epsilon)\sigma. \label{eq:measurement-errors-complex-accuracy}
    \end{equation}
\end{proposition}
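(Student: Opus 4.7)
The plan is to condition on whether the advisee's measurement of the condition agrees with its true value. Under the natural reading of the setup, the event $\{\hat{X}_\theta \neq X_\theta\}$ is an independent ``flip'' that does not depend on $A$ given $X_\theta$. With that, we can decompose
\[ \Pr(A=\hat{X}_\theta) = (1-\epsilon)\Pr(A=X_\theta) + \epsilon\Pr(A\neq X_\theta), \]
reducing the computation to a known quantity and its complement.

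Next I would invoke the accuracy expression already established in equation~\eqref{eq:prob-Y-r-theta}, namely $\Pr(A=X_\theta) = 0.5 + 2\theta\sigma$, which follows directly from $\Pr(A=1)=0.5$, $\Pr(X_\theta=1)=x$, and $\Cov(A,X_\theta)=\theta\sigma$ by writing $\Pr(A=X_\theta)=\Pr(A=1,X_\theta=1)+\Pr(A=0,X_\theta=0)$ in terms of the marginals and the covariance. Consequently $\Pr(A\neq X_\theta)=0.5-2\theta\sigma$.

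Substituting these into the decomposition gives
\[ \Pr(A=\hat{X}_\theta) = (1-\epsilon)(0.5+2\theta\sigma) + \epsilon(0.5-2\theta\sigma) = 0.5 + 2\theta(1-2\epsilon)\sigma, \]
which is~\eqref{eq:measurement-errors-complex-accuracy}.

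There is no real technical obstacle here; the calculation is a one-line total-probability argument. The only point deserving care is the implicit independence assumption in the decomposition step: the measurement noise must be assumed independent of $A$ conditional on $X_\theta$, and the prevalence statement $\Pr(\hat{X}_\theta\neq X_\theta)=\epsilon$ must hold uniformly (i.e., not just as a marginal but also conditionally on the true value of $X_\theta$, since otherwise the decomposition would pick up cross-terms). Stating this independence explicitly at the outset is the only subtlety worth flagging.
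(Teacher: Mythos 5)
Your proof is correct and rests on the same ingredients as the paper's: the reduction to $\Pr(A=X_\theta)=0.5+2\theta\sigma$ via the conditional independence of the measurement noise given $X_\theta$ (the paper conditions on $X_\theta$ and expands four joint terms, whereas you condition directly on the agreement event $\{\hat{X}_\theta=X_\theta\}$, which is a slightly more economical grouping of the same total-probability argument). Your closing remark correctly identifies the implicit assumptions --- noise independent of $A$ given $X_\theta$, with error rate $\epsilon$ conditional on either value of $X_\theta$ --- which the paper also uses without fully spelling them out in the proposition statement.
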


Measurement errors decrease the covariance between the observed condition and the correct action.
By Proposition~\ref{prop:Delta-Phi-comparative-statics-sigma}, this decrease makes the expert less willing to choose the complex rule if the reputation payoff function $\psi$ is linear or convex.
However, if $\psi$ is concave then measurement errors increase the expected reputation payoff from choosing the complex rule because they make that rule less diagnostic of the expert's competence.
Thus, introducing measurement errors may lead the expert to switch to the complex rule if the reputation benefit exceeds the accuracy cost.

In practice, measurement errors may arise exogenously from, for example, imprecise medical tests.
They may also arise endogenously from the advisee optimally allocating their scarce attention and mental resources.%
\footnote{
This possibility relates to \citeapos{Oprea-2020-AER} claim that complex rules are cognitively costly to implement.
}
For example, suppose the advisee receives benefit $b>0$ from taking the correct action and zero from taking the incorrect action.
Suppose further that she endures cost $c(0.5-\epsilon)^2$ from observing $\hat{X}_\theta$ with error $\epsilon\in[0,0.5]$, where $c\in(0,2b\sigma\pi_0)$.%
\footnote{
This upper bound on $c$ ensures $\epsilon^*<0.5$.
}
If the expert chooses the complex rule then the advisee chooses the error $\epsilon^*$ that maximizes
\[ b\left(0.5+2\sigma(1-2\epsilon)\pi_0\right)-c(0.5-\epsilon)^2, \]
which has unique maximizer
\[ \epsilon^*\equiv0.5-\frac{2b\sigma\pi_0}{c}. \]
Substituting $\epsilon=\epsilon^*$ into~\eqref{eq:measurement-errors-complex-accuracy} then gives
\[ \Pr(A=\hat{X}_\theta)=0.5+\frac{4b\theta\sigma^2\pi_0}{c}. \]
Thus, in this example, the accuracy of the complex rule is increasing in the advisee's benefit-cost ratio $b/c$.
Now suppose the expert knows $b$ and $c$, and anticipates the advisee's choice of error $\epsilon^*$ when choosing between rules.
Then it is possible the advisee is willing to ``pay'' for low errors because her benefit-cost ratio is high, but this payment leads the expert to avoid the complex rule because it becomes too diagnostic of his competence.
If the advisee wants to learn the expert's competence (e.g., because she will face a higher stakes choice in the future) then she may prefer to under-invest in measuring the condition correctly.
This would encourage the expert to choose the complex rule and reveal information about his competence.

\subsection{Expert knows $\theta$}
\label{sec:known-type}

Sections~\ref{sec:model-setup} and~\ref{sec:model-analysis} assume that the expert does not know whether he is competent.
This assumption allows me to isolate his trade-off between providing accurate advice and managing reputational risk.
If the expert knows his competence type then he also has a signaling motive.
I analyze that motive below.

Suppose the expert knows his competence type $\theta$ but the advisee does not.
Then his choice of rule may reveal information about $\theta$.
Lemma~\ref{lem:known-type-posterior} describes how the advisee incorporates that information into her posterior belief about $\theta$.

\begin{lemma}
    \label{lem:known-type-posterior}
    Suppose the expert knows $\theta$ but the advisee does not.
    Let $\hat{q}_\theta\equiv\Pr(r=C\,\vert\,\theta)$ be the probability with which the advisee believes an expert with type $\theta$ chooses the complex rule.
    Then her posterior belief about $\theta$ after observing the chosen rule $r\in\{S,C\}$ and its outcome $Y\in\{0,1\}$ is given by
    \begin{equation}
        \Pr(\theta=1\,\vert\,r,Y)
        = \begin{cases}
            \pi_S & \text{if}\ r=S \\
            \frac{(1+4\sigma)\pi_C}{1+4\sigma\pi_C} & \text{if}\ r=C\ \text{and}\ Y=1 \\
            \frac{(1-4\sigma)\pi_C}{1-4\sigma\pi_C} & \text{if}\ r=C\ \text{and}\ Y=0,
        \end{cases} \label{eq:known-type-posterior}
    \end{equation}
    where
    \begin{equation}
        \pi_r=\begin{cases}
            \frac{(1-\hat{q}_1)\pi_0}{(1-\hat{q}_1)\pi_0+(1-\hat{q}_0)(1-\pi_0)} & \text{if}\ r=S \\
            \frac{\hat{q}_1\pi_0}{\hat{q}_1\pi_0+\hat{q}_0(1-\pi_0)} & \text{if}\ r=C
        \end{cases} \label{eq:known-type-intermediate-belief}
    \end{equation}
    is her ``intermediate'' belief about $\theta$ after observing $r$ but before observing $Y$.
\end{lemma}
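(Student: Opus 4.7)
The plan is to compute the advisee's posterior in two stages, using the law of iterated beliefs: first update on the observed rule $r$, then on the observed outcome $Y$. Because the advisee's conjecture about the expert's behaviour is summarised by the mixing probabilities $\hat q_0,\hat q_1$, the first stage is a direct application of Bayes' rule to the signal $r$ alone.

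For the first stage, I would write
\[
\Pr(r=C\,\vert\,\theta)=\hat q_\theta,\qquad \Pr(r=S\,\vert\,\theta)=1-\hat q_\theta,
\]
and combine this with the prior $\Pr(\theta=1)=\pi_0$ to obtain $\Pr(\theta=1\,\vert\,r)$ by Bayes' rule. The case $r=C$ gives $\pi_C$ in~\eqref{eq:known-type-intermediate-belief}, and the case $r=S$ gives $\pi_S$, which establishes that $\pi_r$ is indeed the advisee's intermediate belief about $\theta$ upon observing $r$ but before observing $Y$.

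For the second stage, I would treat $\pi_r$ as the new effective prior and update on $Y$ using the conditional distribution~\eqref{eq:prob-Y-r-theta}. When $r=S$ we have $\Pr(Y=1\,\vert\,r=S,\theta)=0.5$ for both $\theta\in\{0,1\}$, so $Y$ is independent of $\theta$ under $r=S$ and the posterior equals the intermediate belief $\pi_S$. When $r=C$, substituting $\Pr(Y=1\,\vert\,r=C,\theta=1)=0.5+2\sigma$ and $\Pr(Y=1\,\vert\,r=C,\theta=0)=0.5$ into Bayes' rule with prior $\pi_C$ yields the two expressions
\[
\Pr(\theta=1\,\vert\,r=C,Y=1)=\frac{(1+4\sigma)\pi_C}{1+4\sigma\pi_C},\qquad \Pr(\theta=1\,\vert\,r=C,Y=0)=\frac{(1-4\sigma)\pi_C}{1-4\sigma\pi_C},
\]
after clearing the factor of $0.5$ from numerator and denominator; these are structurally identical to~\eqref{eq:posterior} with $\pi_C$ replacing $\pi_0$, so the algebra mirrors the derivation of~\eqref{eq:posterior} given earlier.

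The only conceptual subtlety, and the point I would emphasise in the write-up, is that the expert's choice of rule can now carry information about $\theta$ (because $r$ may depend on $\theta$ through $\hat q_\theta$), so the first-stage update is non-trivial; everything else is mechanical Bayesian arithmetic, and there is no real obstacle since the second-stage computation reduces to the one already performed in~\eqref{eq:posterior}.
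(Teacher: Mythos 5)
Your proposal is correct and follows essentially the same route as the paper: a two-stage Bayesian update that first conditions on the observed rule $r$ to obtain the intermediate belief $\pi_r$, then conditions on $Y$ by reusing the derivation of~\eqref{eq:posterior} with $\pi_C$ in place of $\pi_0$. Nothing is missing.
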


Equations~\eqref{eq:posterior} and~\eqref{eq:known-type-posterior} coincide when $p_0=p_1$; that is, when the advisee believes that competent and incompetent experts choose between rules with the same probabilities.
Then the choice of rule is uninformative and so $\pi_r=\pi_0$ for each $r\in\{S,C\}$.

I now study the perfect Bayesian equilibrium (henceforth ``equilibrium'') of the game in which the expert chooses a rule and the advisee forms a posterior belief about $\theta$.
In this game, the expert's strategy comprises (i)~a probability $p_\theta$ with which he chooses the complex rule given his type $\theta$ and (ii)~a conjecture $\hat{p}_{1-\theta}$ about the corresponding probability for the other type.
The advisee's strategy comprises a conjecture $\hat{q}_\theta$ about the probability for each type.
She uses this conjecture to form her posterior belief~\eqref{eq:known-type-posterior}.
In equilibrium, all conjectures are correct: $p_\theta=\hat{p}_\theta=\hat{q}_\theta$.
As a shorthand, I describe an equilibrium by the pair $(p_0,p_1)$ of probabilities that define it, and refer to $p_\theta$ as the $\theta$-type expert's ``strategy.''
This strategy is ``pure'' if $p_\theta\in\{0,1\}$ and ``mixed'' if $p_\theta\in(0,1)$.
It is his best-response to the strategy played by the other type, given the advisee's subsequent inference about $\theta$ and its reputational consequences.

Lemma~\ref{lem:known-type-expected-payoffs} defines the expected payoff that experts of each type maximize to determine their best-response.

\begin{lemma}
    \label{lem:known-type-expected-payoffs}
    Suppose the expert knows $\theta$ but the advisee does not.
    Let $p_\theta$ be the equilibrium probability with which an expert with type $\theta$ chooses the complex rule.
    Then such an expert has expected payoff
    \begin{equation}
        \Omega_\theta(\sigma,\pi_0,p_0,p_1)
        \equiv p_\theta\Phi_\theta(C,\sigma,\pi_0,p_0,p_1)+(1-p_\theta)\Phi_\theta(S,\sigma,\pi_0,p_0,p_1), \label{eq:known-type-expected-payoff}
    \end{equation}
    where
    \begin{equation}
        \Phi_\theta(C,\sigma,\pi_0,p_0,p_1)
        \equiv 0.5+2\sigma\theta+(0.5+2\sigma\theta)\psi\left(\frac{(1+4\sigma)\pi_C}{1+4\sigma\pi_C}\right)+(0.5-2\sigma\theta)\psi\left(\frac{(1-4\sigma)\pi_C}{1-4\sigma\pi_C}\right) \label{eq:known-type-expected-payoff-complex}
    \end{equation}
    is his expected payoff from choosing the complex rule with certainty,
    \begin{equation}
        \Phi_\theta(S,\sigma,\pi_0,p_0,p_1)\equiv 0.5+\psi\left(\pi_S\right) \label{eq:known-type-expected-payoff-simple}
    \end{equation}
    is his expected payoff from choosing the simple rule with certainty, and $\pi_C$ and $\pi_S$ are defined as in Lemma~\ref{lem:known-type-posterior}.
\end{lemma}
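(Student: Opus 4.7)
The plan is to unpack the expected payoff $\Omega_\theta$ into its constituent pieces by conditioning on the realized rule and appealing to the Bayesian posteriors already catalogued in Lemma~\ref{lem:known-type-posterior}. Since $p_\theta$ is the probability that a type-$\theta$ expert chooses the complex rule, the law of total expectation gives $\Omega_\theta(\sigma,\pi_0,p_0,p_1)=p_\theta\,\Phi_\theta(C,\sigma,\pi_0,p_0,p_1)+(1-p_\theta)\,\Phi_\theta(S,\sigma,\pi_0,p_0,p_1)$, provided $\Phi_\theta(r,\cdot)$ is correctly identified as the type-$\theta$ expert's expected payoff conditional on choosing $r$ with certainty. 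So the bulk of the proof is verifying the two formulas~\eqref{eq:known-type-expected-payoff-complex} and~\eqref{eq:known-type-expected-payoff-simple}.

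First I would treat $r=S$. By~\eqref{eq:prob-Y-r-theta}, $\Pr(Y=1\,\vert\,r=S,\theta)=0.5$ regardless of $\theta$, so the expected accuracy payoff is $0.5$. Since $Y$ is independent of $\theta$ given $r=S$, the posterior about $\theta$ in~\eqref{eq:known-type-posterior} collapses to the intermediate belief $\pi_S$ whatever the realization of $Y$. Thus the reputation payoff is the deterministic value $\psi(\pi_S)$, yielding $\Phi_\theta(S,\sigma,\pi_0,p_0,p_1)=0.5+\psi(\pi_S)$, which matches~\eqref{eq:known-type-expected-payoff-simple}.

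Next I would treat $r=C$. Applying~\eqref{eq:prob-Y-r-theta} conditionally on the expert's actual (known) type, the advisee takes the correct action with probability $0.5+2\sigma\theta$, which is the expected accuracy payoff. The reputation payoff is $\psi$ evaluated at one of the two posteriors in~\eqref{eq:known-type-posterior}: weighting each by its conditional probability under type $\theta$ gives expected reputation payoff
\[
(0.5+2\sigma\theta)\,\psi\!\left(\frac{(1+4\sigma)\pi_C}{1+4\sigma\pi_C}\right)+(0.5-2\sigma\theta)\,\psi\!\left(\frac{(1-4\sigma)\pi_C}{1-4\sigma\pi_C}\right).
\]
Adding the accuracy and reputation components yields~\eqref{eq:known-type-expected-payoff-complex}. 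Combining the two cases using the decomposition in the opening paragraph completes the proof.

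The only conceptual subtlety, and the step I would be most careful about, is that the \emph{true} distribution of $Y$ conditional on $(r,\theta)$ is still governed by~\eqref{eq:prob-Y-r-theta}, even though the advisee's beliefs about $\theta$ are now filtered through $\pi_C$ or $\pi_S$ rather than $\pi_0$. This is why accuracy payoffs use the objective probabilities $0.5+2\theta\sigma$ and $0.5$, while reputation payoffs use the posteriors from Lemma~\ref{lem:known-type-posterior}; keeping these two roles of the information separated is the main thing to get right. Everything else is routine bookkeeping.
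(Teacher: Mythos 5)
Your proposal is correct and follows essentially the same route as the paper's (much terser) proof: decompose via the law of total expectation, use the true outcome probabilities $0.5+2\sigma\theta$ from the expert's known type for the accuracy term, and evaluate $\psi$ at the posteriors from Lemma~\ref{lem:known-type-posterior} for the reputation term. Your explicit note that the objective distribution of $Y$ is governed by $\theta$ while the reputation payoff is filtered through $\pi_C$ or $\pi_S$ is exactly the point the paper leaves implicit.
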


I first show that there are no pure separating equilibria.
This is because incompetent experts never play strategies that allow the advisee to diagnose their incompetence with certainty.

\begin{proposition}
    \label{prop:known-type-pure-equilibrium}
    Suppose the expert knows $\theta$ but the advisee does not.
    If a pure strategy equilibrium exists, then it is pooling.
\end{proposition}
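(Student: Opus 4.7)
The plan is to argue by contradiction and enumerate the two putative separating profiles. Any pure strategy profile lies in $\{0,1\}^2$; since $(0,0)$ and $(1,1)$ are pooling, it suffices to rule out the two separating profiles $(p_0,p_1) = (0,1)$ and $(p_0,p_1) = (1,0)$. I would show that in each case the incompetent type ($\theta = 0$) has a strictly profitable deviation that mimics the competent type, contradicting the equilibrium hypothesis. The mechanism is that separation makes the rule choice alone fully reveal $\theta$, so mimicking the competent type lets the incompetent expert lock in the maximum reputation $\psi(1)$.

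\textbf{Case $(p_0,p_1) = (0,1)$.} By Lemma~\ref{lem:known-type-posterior}, the intermediate beliefs are $\pi_S = 0$ and $\pi_C = 1$. Plugging $\pi_C = 1$ into~\eqref{eq:known-type-expected-payoff-complex}, both conditional posteriors $(1 \pm 4\sigma)\pi_C / (1 \pm 4\sigma\pi_C)$ collapse to $1$ irrespective of $Y$, so the incompetent expert's deviation payoff from choosing the complex rule is $0.5 + \psi(1)$. His equilibrium payoff from the simple rule is $0.5 + \psi(\pi_S) = 0.5 + \psi(0)$. Since $\psi(0) < \psi(1)$ by assumption, deviation is strictly profitable.

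\textbf{Case $(p_0,p_1) = (1,0)$.} Symmetrically, $\pi_C = 0$ and $\pi_S = 1$. Substituting $\pi_C = 0$, both conditional posteriors collapse to $0$, so the incompetent expert's equilibrium payoff is $0.5 + \psi(0)$, whereas deviating to the simple rule delivers $0.5 + \psi(\pi_S) = 0.5 + \psi(1)$. Again the deviation strictly dominates.

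The only computational check is confirming that $(1 \pm 4\sigma)\pi_C / (1 \pm 4\sigma\pi_C) = \pi_C$ when $\pi_C \in \{0,1\}$, which is immediate. The main conceptual point, and the feature that makes the argument so clean, is that perfect separation kills the informativeness of $Y$: once the rule choice reveals the type, no realization of $Y$ can move beliefs, so mimicry is risk-free for the incompetent expert. I do not need to check deviations by the competent type, nor worry about off-path beliefs, because the incompetent deviation alone suffices to break each separating candidate.
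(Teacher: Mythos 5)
Your proof is correct and follows essentially the same route as the paper's: enumerate the two separating profiles, compute the intermediate beliefs $\pi_C,\pi_S\in\{0,1\}$ from Lemma~\ref{lem:known-type-posterior}, observe that the posteriors collapse so the outcome $Y$ carries no information, and conclude that the incompetent type strictly gains $\psi(1)-\psi(0)>0$ by mimicking. The extra remarks about not needing to check the competent type's deviations or off-path beliefs are accurate but add nothing beyond what the paper's argument already implies.
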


The game may have two, one, or no pure pooling equilibria.
It may also have mixed strategy equilibria.
The number and nature of equilibria depend on the parameters $\sigma$ and $\pi_0$, and the shape of the reputation payoff function $\psi$.

To gain traction, I focus on the ``replaceable expert'' setting discussed in Section~\ref{sec:wage-example}.
I assume the replacement threshold equals $\pi^*=\pi_0$.
Lemma~\ref{lem:known-type-wage-example-expected-payoffs} defines the expert's expected payoffs in this setting.

\begin{lemma}
    \label{lem:known-type-wage-example-expected-payoffs}
    Suppose the expert knows $\theta$ but the advisee does not.
    Let the probabilities $p_\theta$ and functions $\Phi_\theta(r,\sigma,\pi_0,p_0,p_1)$ be defined as in Lemma~\ref{lem:known-type-expected-payoffs}, and let the reputation payoff function $\psi$ be defined as in~\eqref{eq:wage-example-psi} with $\pi^*=\pi_0$.
    Then
    \begin{equation}
        \Phi_\theta(C,\sigma,\pi_0,p_0,p_1)
        =\begin{cases}
            0.5+2\sigma\theta & \text{if}\ p_1(1+4\sigma)<p_0 \\
            0.5+2\sigma\theta+(0.5+2\sigma\theta)w & \text{if}\ p_1(1-4\sigma)<p_0\le p_1(1+4\sigma) \\
            0.5+2\sigma\theta+w & \text{if}\ p_0\le p_1(1-4\sigma)
        \end{cases} \label{eq:known-type-wage-example-expected-payoff-complex}
    \end{equation}
    and
    \begin{equation}
        \Phi_\theta(S,\sigma,\pi_0,p_0,p_1)
        =\begin{cases}
            0.5+w & \text{if}\ p_0\ge p_1 \\
            0.5 & \text{otherwise}.
        \end{cases} \label{eq:known-type-wage-example-expected-payoff-simple}
    \end{equation}
\end{lemma}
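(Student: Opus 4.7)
The proof is a direct plug-and-chug from Lemmas~\ref{lem:known-type-posterior} and~\ref{lem:known-type-expected-payoffs}, exploiting the step-function structure of $\psi$ in~\eqref{eq:wage-example-psi}. My plan is to substitute the equilibrium condition $\hat q_\theta=p_\theta$ into $\pi_S$ and $\pi_C$, then reduce each term $\psi(\cdot)$ appearing in~\eqref{eq:known-type-expected-payoff-complex} and~\eqref{eq:known-type-expected-payoff-simple} to either $w$ or $0$ by comparing its argument to the threshold $\pi^*=\pi_0$. The real content is translating each inequality of the form ``posterior $\ge\pi_0$'' into a clean condition on $(p_0,p_1)$.

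The simple-rule case is immediate: cross-multiplying $\pi_S\ge\pi_0$ using $\pi_S=(1-p_1)\pi_0/[(1-p_1)\pi_0+(1-p_0)(1-\pi_0)]$ and cancelling the common factor $1-\pi_0$ yields $p_0\ge p_1$, which produces~\eqref{eq:known-type-wage-example-expected-payoff-simple}.

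For the complex rule, I would observe that the inequality $(1+4\sigma)\pi_C/(1+4\sigma\pi_C)\ge\pi_0$ has exactly the algebraic structure of the computation that produced $\underline\pi$ in Lemma~\ref{lem:wage-example-expected-payoff}, but with $\pi_C$ playing the role of the prior. So it reduces to $\pi_C\ge\underline\pi$, and substituting $\pi_C=p_1\pi_0/(p_1\pi_0+p_0(1-\pi_0))$ and then cancelling $1-\pi_0$ yields $p_0\le p_1(1+4\sigma)$. The analogous calculation using $\overline\pi$ shows that the low-posterior inequality reduces to $p_0\le p_1(1-4\sigma)$. Because $p_1(1-4\sigma)\le p_1(1+4\sigma)$, this partitions the parameter space into the three cases of~\eqref{eq:known-type-wage-example-expected-payoff-complex}: both posteriors below $\pi_0$ (both $\psi$ terms vanish, leaving $0.5+2\sigma\theta$), only the high posterior above (the middle case, where the coefficient of $w$ is $0.5+2\sigma\theta$), and both posteriors above (where $(0.5+2\sigma\theta)+(0.5-2\sigma\theta)=1$, so the reputational term collapses to $w$).

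The main obstacle is purely bookkeeping---tracking strict-versus-weak inequalities and the sign conventions needed to cancel $1-\pi_0$ without flipping them---since the derivation is a sequence of monotone rearrangements. One technical subtlety is that the denominators of $\pi_S$ and $\pi_C$ vanish at the pure pooling profiles $p_0=p_1\in\{0,1\}$, so the corresponding rule is off the equilibrium path; the statement of the lemma implicitly adopts the convention that off-path beliefs are assigned so that the weak inequalities $p_0\ge p_1$ and $p_0\le p_1(1-4\sigma)$ bind at those profiles, which preserves the formulas as stated.
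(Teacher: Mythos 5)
Your proof is correct and takes essentially the same route as the paper's, which likewise reuses the threshold computation from Lemma~\ref{lem:wage-example-expected-payoff} with $\pi_C$ (resp.\ $\pi_S$) in place of $\pi_0$ and then translates $\pi_C<\underline\pi$, $\pi_C<\overline\pi$, and $\pi_S\ge\pi^*$ into the stated conditions on $(p_0,p_1)$ by cancelling $\pi_0$ and $1-\pi_0$. Your remark about off-path beliefs at the degenerate pooling profiles is a subtlety the paper's proof passes over silently, but it does not alter the argument.
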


Proposition~\ref{prop:known-type-wage-example-best-responses} determines competent and incompetent experts' best responses by maximizing their expected payoff $\Omega_\theta(\sigma,\pi_0,p_0,p_1)$ under the conditions specified in Lemma~\ref{lem:known-type-wage-example-expected-payoffs}.

\begin{proposition}
    \label{prop:known-type-wage-example-best-responses}
    Under the conditions specified in Lemma~\ref{lem:known-type-wage-example-expected-payoffs},  we have
    \begin{equation}
        \argmax_p\Omega_0(\sigma,\pi_0,p,p_1)=\{p_1\} \label{eq:known-type-wage-example-best-response-incompetent}
    \end{equation}
    and
    \begin{align}
        &\argmax_p\Omega_1(\sigma,\pi_0,p_0,p) \notag \\
        &= \begin{cases}
            \{0\} & \text{if}\ 1-p_0<4\sigma<1\ \text{and}\ \max\left\{2\sigma,4\sigma/(1-4\sigma)\right\}<w \\
            \{0\}\cup\left[\frac{p_0}{1+4\sigma},p_0\right]\cup\{1\} & \text{if}\ 1-p_0<4\sigma<1\ \text{and}\ 2\sigma<w=4\sigma/(1-4\sigma) \\
            \{1\} & \text{otherwise}.
        \end{cases} \label{eq:known-type-wage-example-best-response-competent}
    \end{align}

\end{proposition}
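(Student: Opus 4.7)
The plan is to exploit the piecewise-linear structure of $\Omega_\theta$ induced by Lemma~\ref{lem:known-type-wage-example-expected-payoffs}. Because $\psi$ in~\eqref{eq:wage-example-psi} is a step function, both $\Phi_\theta(C,\ldots)$ and $\Phi_\theta(S,\ldots)$ are piecewise constant in $(p_0,p_1)$, so $\Omega_\theta$ is piecewise linear in the argument being maximized, with only finitely many breakpoints. Each argmax therefore reduces to comparing $\Omega_\theta$ at those breakpoints and at the endpoints $0$ and $1$.

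For the incompetent type, substituting $\theta=0$ into~\eqref{eq:known-type-wage-example-expected-payoff-complex}--\eqref{eq:known-type-wage-example-expected-payoff-simple} shows that $\Phi_0(C,\ldots)$ steps down from $0.5+w$ to $0.5+0.5w$ to $0.5$ as $p$ crosses $p_1(1-4\sigma)$ and then $p_1(1+4\sigma)$, while $\Phi_0(S,\ldots)$ jumps up by $w$ at $p=p_1$. I would tabulate $\Omega_0(p)$ on each of the five resulting sub-intervals, evaluate it at every breakpoint, and check that the upward jump of $\Phi_0(S)$ at $p=p_1$---which contributes an increment of $(1-p_1)w$ to $\Omega_0(p_1)$---makes $p=p_1$ the unique maximizer, yielding~\eqref{eq:known-type-wage-example-best-response-incompetent}.

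For the competent type, substituting $\theta=1$ and rewriting the conditions in~\eqref{eq:known-type-wage-example-expected-payoff-complex} with $p=p_1$ gives thresholds $p_0/(1+4\sigma)$ and $p_0/(1-4\sigma)$, while $\Phi_1(S,\ldots)$ jumps down by $w$ at $p=p_0$. The four pieces of $\Omega_1(p)$ have slopes $2\sigma-w$, $2\sigma(1+w)-0.5w$, $2\sigma+(0.5+2\sigma)w$, and $2\sigma+w$. The last two are always positive, so the rightmost pieces attain their maxima at the right endpoint, and $\Omega_1(1)$ equals $0.5+2\sigma+w$ when $p_0/(1-4\sigma)\le 1$ and $(0.5+2\sigma)(1+w)$ otherwise. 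Two dividing conditions then partition the parameter space: $p_0\lessgtr 1-4\sigma$ (equivalently $4\sigma\lessgtr 1-p_0$) decides which expression applies for $\Omega_1(1)$, and $w\lessgtr 4\sigma/(1-4\sigma)$ controls the sign of the Region~2 slope. Combining these gives the three cases of~\eqref{eq:known-type-wage-example-best-response-competent}: $\{1\}$ whenever $4\sigma\le 1-p_0$ or $w<4\sigma/(1-4\sigma)$; $\{0\}$ when $1-p_0<4\sigma$ and $w$ exceeds $\max\{2\sigma,4\sigma/(1-4\sigma)\}$; and the middle set at the knife-edge $w=4\sigma/(1-4\sigma)$, where Region~2 is flat and $\Omega_1$ coincides across $\{0\}\cup[p_0/(1+4\sigma),p_0]\cup\{1\}$.

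The main obstacle is the bookkeeping around the jumps. Because $\Omega_1$ has two upward jumps (at $p_0/(1+4\sigma)$ and $p_0/(1-4\sigma)$) and a downward jump (at $p_0$), a naive sub-interval maximization risks picking up spurious optima inside a piece whose slope points the ``wrong'' way relative to an adjacent jump. I would manage this by tracking cumulative jumps together with the within-piece slopes and by explicitly verifying that only $\Omega_1(0)$, $\Omega_1(1)$, and (when relevant) the Region~2 plateau can achieve the global maximum. The knife-edge $w=4\sigma/(1-4\sigma)$ merits separate treatment because flattening Region~2 makes $\Omega_1$ constant on a non-degenerate set, producing the continuum argmax in the middle line of~\eqref{eq:known-type-wage-example-best-response-competent}, and a direct algebraic check then confirms that this plateau exactly equals $\Omega_1(0)$ and $\Omega_1(1)$.
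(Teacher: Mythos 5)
Your competent-type analysis tracks the paper's proof almost exactly: the same four regions with slopes $2\sigma-w$, $2\sigma(1+w)-0.5w$, $2\sigma(1+w)+0.5w$, and $2\sigma+w$, the same two dividing conditions $p_0\lessgtr 1-4\sigma$ and $w\lessgtr 4\sigma/(1-4\sigma)$, and the same identification of $\{0\}$, $\{1\}$, and the Region~2 plateau as the only viable maximizers. As a plan it is sound, with two housekeeping items: treat $4\sigma=1$ separately (there $p_0/(1-4\sigma)$ and $4\sigma/(1-4\sigma)$ are undefined, and the paper runs that boundary as its own case), and make explicit that the comparison $\Omega_1(1)>\max\{\Omega_1(p_0),\Omega_1(p_0/(1+4\sigma))\}$ in the $p_0\le 1-4\sigma$ branch rests on $w(0.5-2\sigma)\ge0$, i.e., on $\sigma\le0.25$.

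The gap is in the incompetent-type argument. The upward jump of $\Omega_0$ at $p=p_1$ (of size $(1-p_1)w$) only shows that $p_1$ beats the adjacent piece $(p_1(1-4\sigma),p_1)$, where $\Omega_0(p)=0.5+0.5pw$. It says nothing about the leftmost piece $[0,p_1(1-4\sigma)]$, on which $\Phi_0(C)=0.5+w$ (the wage is guaranteed under the complex rule because $\pi_C\ge\overline\pi$ there), so $\Omega_0(p)=0.5+pw$ climbs at the full rate $w$ before jumping \emph{down} at $p=p_1(1-4\sigma)$. Carrying out the breakpoint comparison you propose gives $\Omega_0(p_1(1-4\sigma))=0.5+p_1(1-4\sigma)w$ against $\Omega_0(p_1)=0.5+(1-0.5p_1)w$, and the former is strictly larger whenever $p_1(1.5-4\sigma)>1$; for instance $p_1=1$, $\sigma=0.1$, $w=1$ gives $1.1$ versus $1.0$. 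So your plan, executed faithfully, does not deliver $\argmax_p\Omega_0(\sigma,\pi_0,p,p_1)=\{p_1\}$ uniformly in $p_1$: you must either rule out the competing candidate $p=p_1(1-4\sigma)$ or restrict the claim. (The paper's own proof asserts unique maximization at $p=p_1$ without making this comparison; the tension arises because $\Omega_0(\sigma,\pi_0,p,p_1)$ feeds the deviation $p$ into the advisee's belief $\pi_C$, so lowering $p$ below $p_1(1-4\sigma)$ buys the incompetent expert a guaranteed wage under the complex rule. Your checklist, applied honestly, would surface this; do not paper over it with the jump-at-$p_1$ heuristic.)
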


Incompetent experts best-respond by mimicking competent experts.
Thus, in any equilibrium, the advisee learns nothing from the choice of rule because competent and incompetent experts play the same strategies.
Yet she may learn from the \emph{outcome} of the complex rule---i.e., whether it leads to the correct action---if the expert chooses it.
Competent experts have a dominant strategy to choose the complex rule if (i)~$w\le 2\sigma$ or (ii)~$4\sigma<1$ and $w<4\sigma/(1-4\sigma)$---i.e., their reputation concerns are sufficiently weak.
Otherwise, they want to differentiate themselves from incompetent experts.

\begin{figure}
    \centering
    \includegraphics{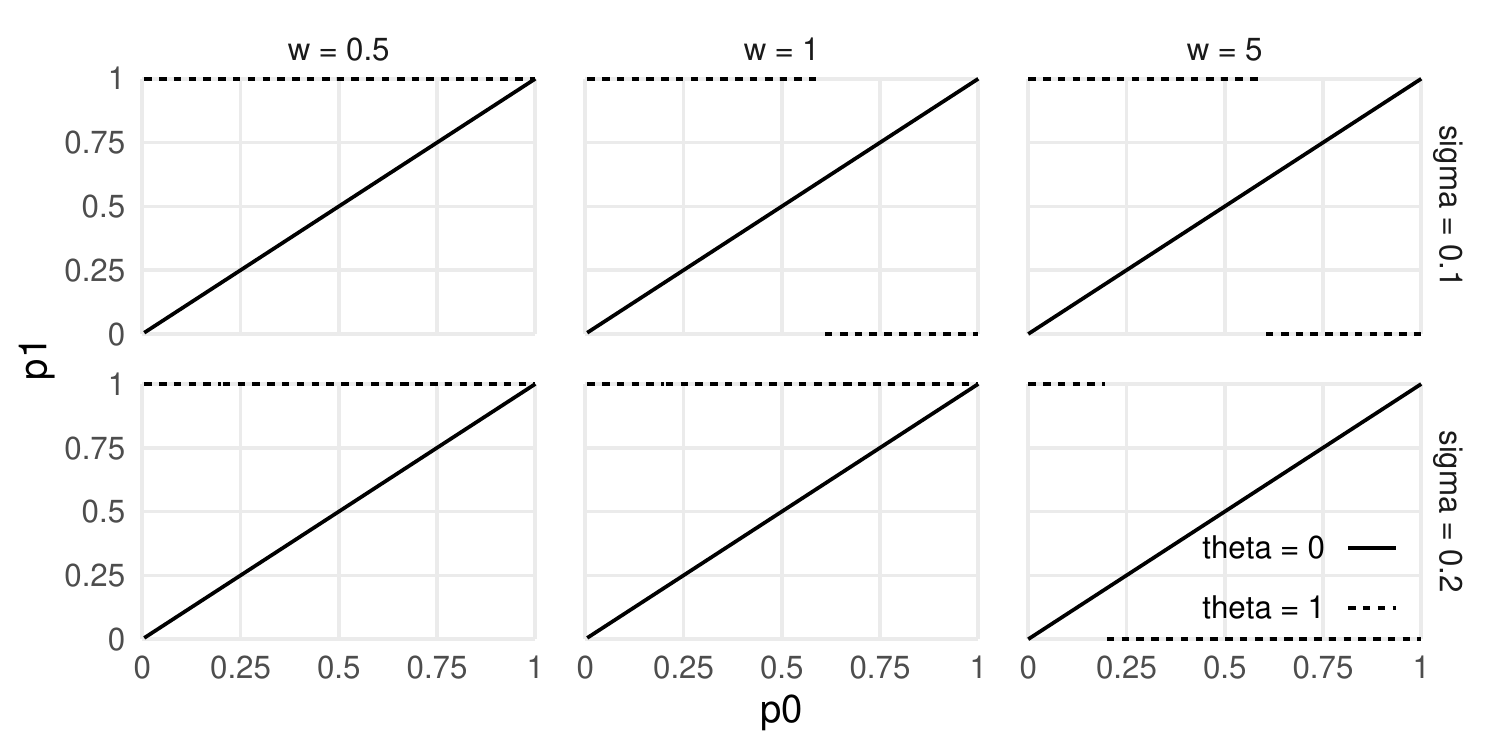}
    \caption{Experts' best-response curves when $\sigma\in\{0.1,0.2\}$ and $w\in\{0.5,1,5\}$}
    \label{fig:known-type-best-responses}
\end{figure}
Figure~\ref{fig:known-type-best-responses} shows the best response curves for experts of each type $\theta\in\{0,1\}$ when $\sigma\in\{0.1,0.2\}$ and $w\in\{0.5,1,5\}$.
These curves do not depend on the prior $\pi_0$.
Their intersections correspond to equilibria.
If the wage $w$ is small and the covariance $\sigma$ is large, then the unique equilibrium involves pure pooling on the complex rule.
Increasing $w$ or decreasing $\sigma$ destroys this equilibrium.
Indeed, Corollary~\ref{crly:known-type-wage-example-equilibria} states that if the wage $w$ is large relative to the covariance $\sigma$ then the game has no equilibria:

\begin{corollary}
    \label{crly:known-type-wage-example-equilibria}
    Under the conditions specified in Lemma~\ref{lem:known-type-wage-example-expected-payoffs},
    \begin{enumerate}

        \item[(i)]
        If $4\sigma<1$ and $\max\left\{2\sigma,4\sigma/(1-4\sigma)\right\}<w$, then there are no equilibria;

        \item[(ii)]
        If $4\sigma<1$ and $2\sigma<w=4\sigma/(1-4\sigma)$, then the pair $(p_0,p_1)$ defines an equilibrium if and only if $p_0=p_1\in(1-4\sigma,1]$.

    \end{enumerate}
    Otherwise, the pair $(p_0,p_1)=(1,1)$ defines the unique equilibrium.
\end{corollary}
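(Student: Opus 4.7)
The plan is to reduce the equilibrium condition to a one-dimensional fixed-point problem using the incompetent expert's unique best response, and then dispatch the three cases by reading the competent expert's best-response correspondence in~\eqref{eq:known-type-wage-example-best-response-competent}.

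First, I invoke~\eqref{eq:known-type-wage-example-best-response-incompetent}: the incompetent expert's unique best response is $p_0=p_1$. Hence every equilibrium satisfies $p_0=p_1$; write $p$ for the common value. The equilibrium condition collapses to requiring $p\in\argmax_{p'}\Omega_1(\sigma,\pi_0,p,p')$, i.e.\ $p$ must be a fixed point of the competent expert's best-response correspondence evaluated at $p_0=p$. I then split on whether $1-p<4\sigma$. Whenever $p\le 1-4\sigma$ (which in particular holds for every $p<1$ when $4\sigma=1$), neither of the first two cases of~\eqref{eq:known-type-wage-example-best-response-competent} applies, so the best response is $\{1\}$ and the fixed-point condition forces $p=1$; but $p\le 1-4\sigma<1$ is inconsistent with $p=1$ whenever $\sigma>0$. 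Consequently, any candidate equilibrium with $p<1$ must satisfy $p>1-4\sigma$, which in turn requires $4\sigma<1$.

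Finally, I dispatch the three cases. For~(i), the hypothesis $\max\{2\sigma,4\sigma/(1-4\sigma)\}<w$ places us in the first branch of~\eqref{eq:known-type-wage-example-best-response-competent} at every $p>1-4\sigma$, so the best response is $\{0\}$; the fixed-point condition then demands $p=0$, contradicting $p>1-4\sigma>0$. Hence no equilibrium exists. For~(ii), the hypothesis $2\sigma<w=4\sigma/(1-4\sigma)$ places us in the second branch at every $p>1-4\sigma$, where the best-response set $\{0\}\cup[p/(1+4\sigma),p]\cup\{1\}$ trivially contains $p$ as the right endpoint of the middle interval; conversely $p=0$ cannot be a fixed point because at $p_0=0$ we have $1-p_0=1>4\sigma$, placing us in the ``otherwise'' branch with best response $\{1\}$. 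So the equilibria are exactly the pairs $(p,p)$ with $p\in(1-4\sigma,1]$. In the remaining ``otherwise'' case of the corollary, either $\sigma=1/4$ (so the first two branches of~\eqref{eq:known-type-wage-example-best-response-competent} never apply) or $4\sigma<1$ with both wage hypotheses failing; in either subcase the competent expert's best response reduces to $\{1\}$ at every $p$, and the unique fixed point is $p=1$, yielding the unique equilibrium $(1,1)$.

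The main obstacle is keeping the piecewise structure straight: the competent expert's best-response correspondence depends simultaneously on $p_0$ (through the inequality $1-p_0<4\sigma$) and on $w$ (through two strict inequalities, one of which is an equality), and these must be intersected with the fixed-point constraint $p_0=p_1$ inherited from the incompetent expert. The subtlety in case~(ii) is that although $0$ belongs to the displayed best-response set, the set is only a best-response set when $1-p_0<4\sigma$; at $p_0=0$ this condition fails, so $p=0$ is not an equilibrium and the equilibrium set is the half-open interval $(1-4\sigma,1]$ rather than $\{0\}\cup(1-4\sigma,1]$.
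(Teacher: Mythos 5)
Your proof is correct and follows essentially the same route as the paper's: use the incompetent expert's unique best response to force $p_0=p_1$, then treat equilibrium as a fixed point of the competent expert's best-response correspondence from Proposition~\ref{prop:known-type-wage-example-best-responses} and read off the three cases. The only blemish is the parenthetical claiming $p\le 1-4\sigma$ holds for every $p<1$ when $4\sigma=1$ (that inequality then reads $p\le 0$); this is immaterial because when $4\sigma=1$ the first two branches of~\eqref{eq:known-type-wage-example-best-response-competent} are already ruled out by their requirement that $4\sigma<1$, as your final paragraph correctly notes.
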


Consider the knife-edge case in which the competent expert's condition is imperfectly predictive (i.e., $4\sigma<1$) and $2\sigma<w=4\sigma/(1-4\sigma)$.
If he can mimic an incompetent expert who chooses the complex rule with sufficiently high probability, then he is indifferent between doing so and choosing the complex rule for sure.
Otherwise, and outside this knife-edge case, the game either has no equilibrium or a unique equilibrium with pure pooling on the complex rule.
This pooling equilibrium exists only when the expert's reputational concerns (indexed by $w$) are small relative to his accuracy concerns (indexed by $\sigma$).
Thus, letting the expert know his type preserves the main insight from Sections~\ref{sec:model-setup}--\ref{sec:wage-example}: experts give complex advice only if they face low downside reputational risk.%
\footnote{
This insight echoes an insight from the innovation literature that failure tolerance tends to improve performance \citep[see, e.g.,][]{Azoulay-etal-2011-RAND,Manso-2011-JF,Tian-Wang-2014-RFStud}.
}


\section{Related literature}
\label{sec:literature}

My analysis is closely related to the literature on reputational cheap talk.
That literature explores how the incentive to appear informed distorts the equilibria of cheap talk games in which a sender shares his signal of an \emph{ex post} observable state.
\cite{Ottaviani-Sorensen-2006-JET,Ottaviani-Sorensen-2006-RAND} show that, in equilibrium, the sender does not report his signal truthfully if it depends on whether he is informed.
Instead, he manipulates his report to induce the most favorable belief about his informedness.
For example, he may bias his report toward the receiver's prior belief about the state to hedge against the report disagreeing with the state.%
\footnote{
\cite{Klein-Mylovanov-2017-JMathEcon} argue that this behavior is limited to models with short time horizons.
If the expert has a long horizon, then he may forgo ``impressing'' the receiver in the short run if he becomes privately pessimistic about his ability to impress in the long run.
}
This hedging behavior corresponds to choosing the simple rule in my model.
Indeed, one could interpret my model as an example of reputational cheap talk in which the expert reports information about the predictive power of the condition he identifies.
However, the expert in my model has reputational concerns associated with the correctness of the action taken as a consequence of his report, rather than the correctness of the report itself.%
\footnote{
This distinction between actions and information is the focus of \cite{Andrews-Shapiro-2021-ECTA}, who study statistical decision theoretic models of science communication.
}
The innovation of my paper is to show how reputational concerns influence the type of advice experts provide: do they make ``blanket recommendations,'' or do they provide conditions under which to take different actions?%
\footnote{
\cite{Levit-Tsoy-2020-AEJMicro} argue that experts make blanket recommendations to conceal their bias.
}

\cite{Backus-Little-2020-APSR} offer another model similar to mine: one of an unbiased expert who provides advice about an action but has reputational concerns.
In their model, the correct action may be impossible to identify and only competent experts know if identification is possible.
This makes competent experts averse to admitting ignorance because doing so makes them look incompetent.
Whereas in my model, the correct action \emph{is} possible to identify but only competent experts can identify it.
This makes competent experts \emph{pretend to be ignorant} (by choosing the simple rule) if the reputational cost of appearing incompetent is sufficiently large.
\citeauthor{Backus-Little-2020-APSR} and my conclusions differ because our definitions of ``competence'' differ.

My definition of ``expertise'' is similar to that used by \cite{Callander-etal-2021-JPE}.
They note that experts tend to provide ``referential'' advice that combines suggestions for what to do with examples of inferior alternatives.
The authors posit that expertise consists in knowing the relative merits of different suggestions.
Similarly, I posit that expertise consists in knowing conditions under which different suggestions are preferable.
\citeauthor{Callander-etal-2021-JPE} and my analyses differ in what motivates the expert.
They focus on the expert's desire to retain an informational advantage over their advisee.
In contrast, I allow the expert to forfeit this advantage because it makes his advice more likely to induce the correct the action and may yield reputational benefits.

\cite{Callander-etal-2021-JPE} build on the literature on disclosing verifiable information.
The strand of that literature closest to my analysis is due to \cite{Dye-1985-JAccRes}.
It considers the possibility that the sender is uninformed.
Then the receiver cannot distinguish between the sender withholding information and having none \citep[see, e.g.,][]{Dziuda-2011-JET}.
\citeauthor{Dye-1985-JAccRes} studies a model in which the receiver knows the sender has information but does not know its value.
This is similar to my model, in which the advisee knows the expert is aware of predictive conditions but does not know their predictive power.
\citeauthor{Dye-1985-JAccRes} and my models differ in that the expert I study has no notion of whether his information is ``good'' or ``bad'' for the advisee.
So it is not that competent experts ``can partially conceal bad news by pooling with those who are unable to disclose'' \cite[p.\! 226]{Grubb-2011-JEMS}.
Instead, competent and incompetent experts avoid disclosure because they want to avoid its reputational consequences.

My assumption that the expert does not know his competence type is typical in the literature on career concerns.
That literature explores how agents' desire to appear competent drives a wedge between their risk preferences and principals' \citep{Holmstrom-RicartiCosta-1986-QJE,Holmstrom-1999-REStud}.%
\footnote{
\cite{Dewatripont-etal-1999-REStud} generalize \citeapos{Holmstrom-1999-REStud} model through the lens of information design.
}
This wedge can lead to inefficient information seeking \citep{Levy-2004-EER,Milbourn-etal-2001-RAND,Scharfstein-Stein-1990-AER} and risk-taking \citep{Aghion-Jackson-2016-AEJMicro,Chevalier-Ellison-1997-JPE,Prendergast-Stole-1996-JPE}.
For example, \cite{Aghion-Jackson-2016-AEJMicro} analyze replacement schemes in a dynamic setting similar to that described in Section~\ref{sec:wage-example}.%
\footnote{
\cite{Tsuyuhaha-2012-EB} also analyzes this setting in a two-period model.
He focuses on replacing biased experts, whereas I focus on replacing incompetent experts.
}
They study a ``leader'' who takes a materially safe or risky action, where the cost of the risky action being bad exceeds the benefit of it being good.
This concavity in material payoffs drives the leader toward the safe action, which is less beneficial on average but reveals no information about his type.
In contrast, the expert in my model chooses between \emph{reputationally} safe and risky advice, and I show how his choice depends on the concavity in his reputational payoffs.%
\footnote{
\cite{Rappoport-2022-} also distinguishes between material and reputational payoffs.
}

Many insights from the career concerns literature apply to my model.
For example, \cite{Gibbons-Murphy-1992-JPE} explain that compensation contracts should be strongest for workers close to retirement because they have the weakest career concerns.
The advisee in my model may prefer an older expert with lower expected competence $\pi_0$ to a younger expert with higher expected competence.
This is because the older expert has a lower continuation value of employment, making him more willing to risk losing that value by choosing the complex rule.
I analyze the relationship between expected competence, continuation values, and rule choices in Propositions~\ref{prop:Delta-Phi-comparative-statics-prior-wR} and \ref{prop:wage-example-choice}.

More generally, the reputation payoff function $\psi$ acts as a reduced-form expression for the expert's compensation scheme.
Its shape determines the expert's choice.
For example, if $\psi$ is close to constant then the expert prefers to choose the complex rule and risk being ``wrong'' (see Corollary~\ref{crly:rule-choice-convex}).
This echoes another insight from the career concerns literature: weakening incentives, such as by offering tenure \citep{Aghion-Jackson-2016-AEJMicro} or severance pay \citep{Bonatti-Horner-2017-TE}, can lead to more efficient behavior.

\section{Conclusion}
\label{sec:conclusion}

This paper discusses how reputational concerns impact the type of advice that experts provide.
I show that experts may ``simply'' their advice by excluding relevant conditions if their payoff is sufficiently concave in posterior beliefs about their competence.
This simplification makes advice worse by pooling states in which actions should and should not be taken.
Advisees can prevent experts from giving simple advice by limiting their reputational risk: by reducing the payoff from being perceived as competent, or by making payoffs independence of perceived competence.


{
\small
\raggedright
\bibliographystyle{apalike}
\bibliography{main}
}

\clearpage

\appendix

\section{Proofs}
\label{sec:proofs}

\subsection{Results in Section~\ref{sec:model-analysis}}

\begin{proof}[Proof of Proposition~\ref{prop:rule-choice}]
    The result follows from the definition of $\Delta\Phi(\sigma,\pi_0)$.
\end{proof}

\begin{proof}[Proof of Corollary~\ref{crly:rule-choice-convex}]
    If $\psi$ is linear or convex then $\Psi(\sigma,\pi_0)=\E[\psi(\pi_1(Y))]\le\psi(\E[\pi_1(Y)])=\psi(\pi_0)$ by Jensen's inequality.
    Then~\eqref{ineq:rule-choice} holds since the left-hand side is positive and the right-hand side is non-positive.
\end{proof}

\begin{proof}[Proof of Corollary~\ref{crly:rule-choice-wR}]
    The result follows from~\eqref{ineq:rule-choice} and the definition of $\Psi(\sigma,\pi_0)$.
\end{proof}

\begin{proof}[Proof of Lemma~\ref{lem:posterior-comparative-statics}]
    Now
    \[ \parfrac{\pi_1(1)}{\sigma}=\frac{4\pi_0(1-\pi_0)}{(1+4\sigma\pi_0)^2} \]
    and
    \[ \parfrac{\pi_1(1)}{\pi_0}=\frac{1+4\sigma}{(1+4\sigma\pi_0)^2} \]
    are positive.
    In contrast,
    \[ \parfrac{\pi_1(0)}{\sigma}=-\frac{4\pi_0(1-\pi_0)}{(1-4\sigma\pi_0)^2} \]
    is negative, while
    \[ \parfrac{\pi_1(0)}{\pi_0}=\frac{1-4\sigma}{(1-4\sigma\pi_0)^2} \]
    equals zero if $\sigma=0.25$ and is positive otherwise.
\end{proof}

\begin{proof}[Proof of Proposition~\ref{prop:Delta-Phi-comparative-statics-sigma}]
    Now $2\sigma\pi_0$ is increasing in $\sigma$, while $\psi(\pi_0)$ is constant in $\sigma$.
    Moreover, by~\eqref{eq:posterior-mean} and Lemma~\ref{lem:posterior-comparative-statics}, increasing $\sigma$ induces a mean-preserving spread in the distribution of $\pi_1(Y)$.
    This spread weakly increases $\Psi(\sigma,\pi_0)$ when $\psi$ is linear or convex \cite[Theorem~2]{Rothschild-Stiglitz-1970-JET}.
\end{proof}

\begin{proof}[Proof of Proposition~\ref{prop:Delta-Phi-comparative-statics-prior-differentiable}]
    We have
    \begin{align*}
        \parfrac{\Psi(\sigma,\pi_0)}{\pi_0}
        &= \parfrac{}{\pi_0}\left[(0.5+2\sigma\pi_0)\psi(\pi_1(1))+(0.5-2\sigma\pi_0)\psi(\pi_1(0))\right] \\
        &= 2\sigma\psi(\pi_1(1))+(0.5+2\sigma\pi_0)\psi'(\pi_1(1))\parfrac{\pi_1(1)}{\pi_0} \\
        &\quad -2\sigma\psi(\pi_1(0))+(0.5-2\sigma\pi_0)\psi'(\pi_1(0))\parfrac{\pi_1(0)}{\pi_0} \\
        &= 2\sigma\left[\psi(\pi_1(1))-\psi(\pi_1(0))\right]+\E\left[\psi'(\pi_1(Y))\parfrac{\pi_1(Y)}{\pi_0}\right]
    \end{align*}
    and therefore
    \begin{align}
        \parfrac{\Delta\Phi(\sigma,\pi_0)}{\pi_0}
        &= \parfrac{}{\pi_0}\left[2\sigma\pi_0+\Psi(\sigma,\pi_0)-\psi(\pi_0)\right] \notag \\
        &= 2\sigma+2\sigma\left[\psi(\pi_1(1))-\psi(\pi_1(0))\right]+\E\left[\psi'(\pi_1(Y))\parfrac{\pi_1(Y)}{\pi_0}\right]-\psi'(\pi_0). \label{eq:Delta-Phi-derivative-prior}
    \end{align}
    The result follows.
\end{proof}

\begin{proof}[Proof of Proposition~\ref{prop:Delta-Phi-comparative-statics-prior-wR}]
    Substituting $\psi(\pi)=w\,R(\pi)$ into~\eqref{eq:Delta-Phi-derivative-prior} gives
    \begin{equation}
        \parfrac{\Delta\Phi(\sigma,\pi_0)}{\pi_0}
        = 2\sigma+w\left(2\sigma\left[R(\pi_1(1))-R(\pi_1(0))\right]+\E\left[R'(\pi_1(Y))\parfrac{\pi_1(Y)}{\pi_0}\right]-R'(\pi_0)\right). \label{eq:Delta-Phi-derivative-wR}
    \end{equation}
    If~\eqref{ineq:Delta-Phi-comparative-statics-prior-differentiable} does not hold then the bracketed term in~\eqref{eq:Delta-Phi-derivative-wR} must be negative because $\sigma>0$.
    In that case, defining
    \[ w^*\equiv\frac{2\sigma}{R'(\pi_0)-\E\left[R'(\pi_1(Y))\parfrac{\pi_1(Y)}{\pi_0}\right]-2\sigma\left[R(\pi_1(1))-R(\pi_1(0))\right]} \]
    yields the result.
\end{proof}

\subsection{Results in Section~\ref{sec:wage-example}}

\begin{proof}[Proof of Lemma~\ref{lem:wage-example-expected-payoff}]
    Suppose the expert chooses the complex rule.
    If
    \begin{equation}
        \pi^*>\pi_1(1)=\frac{(1+4\sigma)\pi_0}{1+4\sigma\pi_0} \label{eq:wage-example-lower-threshold-origin}
    \end{equation}
    then he always receives a reputation payoff of zero.
    Substituting $\pi_0=\underline\pi$ into~\eqref{eq:wage-example-lower-threshold-origin} and rearranging for $\underline\pi$ yields~\eqref{eq:wage-example-lower-threshold}.
    Conversely, if
    \begin{equation}
        \pi^*\le\pi_1(0)=\frac{(1-4\sigma)\pi_0}{1-4\sigma\pi_0} \label{eq:wage-example-upper-threshold-origin}
    \end{equation}
    then he always receives a reputation payoff of $w$.
    Substituting $\pi_0=\overline\pi$ into~\eqref{eq:wage-example-upper-threshold-origin} and rearranging for $\overline\pi$ yields~\eqref{eq:wage-example-upper-threshold}.
    Finally, if $\pi_1(0)\le\pi^*<\pi_1(1)$ then the expert receives a reputation payoff of $w$ if and only if $Y=1$, which occurs with probability $(0.5+2\sigma\pi_0)$.
    Equation~\eqref{eq:wage-example-expected-payoff-complex} follows from the definition of $\Phi(C,\sigma,\pi_0)$.

    Now suppose the expert chooses the simple rule.
    Then he receives a reputation payoff of $w$ if and only if $\pi_0\ge\pi^*$.
    Equation~\eqref{eq:wage-example-expected-payoff-simple} follows from the definition of $\Phi(S,\sigma,\pi_0)$.
\end{proof}

\begin{proof}[Proof of Proposition~\ref{prop:wage-example-choice}]
    Fix $\sigma$ and consider the thresholds $\underline\pi$ and $\overline\pi$ defined in Lemma~\ref{lem:wage-example-expected-payoff}.
    Now $\underline\pi<\pi^*$ and $\overline\pi>\pi^*$, so if $\pi_0<\pi^*$ then
    \begin{align*}
        \Phi(C,\sigma,\pi_0)-\Phi(S,\sigma,\pi_0)
        &> \Phi(C,\sigma,\pi^*)-\Phi(S,\sigma,\pi_0) \\
        &= (0.5+2\sigma\pi^*)(1+w)-0.5 \\
        &> 0
    \end{align*}
    and if $\pi_0\ge\overline\pi$ then $\Phi(C,\sigma,\pi)-\Phi(S,\sigma,\pi)=2\sigma\pi_0>0$.
    Therefore, if the expert chooses the simple rule then $\pi^*\le\pi_0<\overline\pi$.
    For such values of $\pi_0$ we have $\Phi(C,\sigma,\pi_0)=(0.5+2\sigma\pi_0)(1+w)$ and $\Phi(S,\sigma,\pi_0)=0.5+w$.
    These functions of $\pi_0$ coincide when $\pi_0=\pi^\dagger$ with $\pi^\dagger$ defined as in~\eqref{eq:wage-example-prior-dagger}.
    If $\pi^\dagger\le\pi^*$ then $\Phi(C,\sigma,\pi_0)\ge\Phi(S,\sigma,\pi_0)$ for all $\pi_0\in(0,1)$.
    On the other hand, if $\pi^\dagger>\pi^*$ then $\Phi(C,\sigma,\pi_0)<\Phi(S,\sigma,\pi_0)$ if and only if $\pi^*\le\pi_0<\min\{\pi^\dagger,\overline\pi\}$.
    The result follows.
\end{proof}

\begin{proof}[Proof of Corollary~\ref{crly:wage-example-wage-threshold}]
    Let $\pi^\dagger$ be defined as in~\eqref{eq:wage-example-prior-dagger}.
    If $\pi^*=\pi_0$ then $\overline\pi>\pi_0$ and so the expert chooses the complex rule if and only if $\pi^\dagger\le\pi^*=\pi_0$.
    This occurs if and only if $w\le4\sigma\pi_0/(1-4\sigma\pi_0)$.
\end{proof}

\subsection{Results in Section~\ref{sec:extensions}}

\begin{proof}[Proof of Proposition~\ref{prop:measurement-errors-complex-accuracy}]
    By the Law of Total Probability, we have
    \begin{align*}
        \Pr(A=\hat{X}_\theta)
        &= \Pr(A=1,\hat{X}_\theta=1)+\Pr(A=0,\hat{X}_\theta=0) \\
        &= \Pr(A=1,\hat{X}_\theta=1\,\vert\,X_\theta=1)\Pr(X_\theta=1)+\Pr(A=1,\hat{X}_\theta=1\,\vert\,X_\theta=0)\Pr(X_\theta=0) \\
        &\quad +\Pr(A=0,\hat{X}_\theta=0\,\vert\,X_\theta=1)\Pr(X_\theta=1)+\Pr(A=0,\hat{X}_\theta=0\,\vert\,X_\theta=0)\Pr(X_\theta=0).
    \end{align*}
    Now $A$ and $\hat{X}_\theta$ are conditionally independent given $X_\theta$, so
    \[ \Pr(A,\hat{X}_\theta\,\vert\,X_\theta)=\Pr(A\,\vert\,X_\theta)\Pr(\hat{X}_\theta\,\vert\,X_\theta) \]
    and therefore
    \[ \Pr(A,\hat{X}_\theta\,\vert\,X_\theta)\Pr(X_\theta)=\Pr(A,X_\theta)\Pr(\hat{X}_\theta\,\vert\,X_\theta)\]
    by the definition of $\Pr(A\,\vert\,X_\theta)$.
    Thus
    \begin{align*}
        \Pr(A=\hat{X}_\theta)
        &= \Pr(A=1,X_\theta=1)\Pr(\hat{X}_\theta=1\,\vert\,X_\theta=1)+\Pr(A=1,X_\theta=0)\Pr(\hat{X}_\theta=1\,\vert\,X_\theta=0) \\
        &\quad +\Pr(A=0,X_\theta=1)\Pr(\hat{X}_\theta=0\,\vert\,X_\theta=1)+\Pr(A=0,X_\theta=0)\Pr(\hat{X}_\theta=0\,\vert\,X_\theta=0).
    \end{align*}
    Equation~\eqref{eq:measurement-errors-complex-accuracy} follows from substituting in $\Pr(A,X_\theta)$, $\Pr(\hat{X}_\theta\,\vert\, X_\theta)$, and $a=0.5$.
\end{proof}

\begin{proof}[Proof of Lemma~\ref{lem:known-type-posterior}]
    The advisee first observes the chosen rule $r$ and forms an intermediate belief
    \begin{align*}
        \pi_r
        &\equiv \Pr(\theta=1\,\vert\,r) \\
        &= \frac{\Pr(r\,\vert\,\theta=1)\Pr(\theta=1)}{\Pr(r)} \\
        &= \begin{cases}
            \frac{(1-p_1)\pi_0}{(1-p_1)\pi_0+(1-p_0)(1-\pi_0)} & \text{if}\ r=S \\
            \frac{p_1\pi_0}{p_1\pi_0+p_0(1-\pi_0)} & \text{if}\ r=C.
        \end{cases}
    \end{align*}
    She then observes the outcome $Y$ and uses her updated belief $\pi_r$, rather than her prior $\pi_0$, to form her posterior as in~\eqref{eq:posterior}.
\end{proof}

\begin{proof}[Proof of Lemma~\ref{lem:known-type-expected-payoffs}]
    Equations~\eqref{eq:known-type-expected-payoff-complex} and~\eqref{eq:known-type-expected-payoff-simple} follow from Lemma~\ref{lem:known-type-posterior} and the definition of the expected payoff $\Phi_\theta(r,\sigma,\pi_0,p_0,p_1)$.
    Equation~\eqref{eq:known-type-expected-payoff} then follows from the Law of Total Expectation.
\end{proof}

\begin{proof}[Proof of Proposition~\ref{prop:known-type-pure-equilibrium}]
    If $(p_0,p_1)=(0,1)$ then $\pi_C=1$ and $\pi_S=0$, and so
    \begin{align*}
        \Phi_0(C,\sigma,\pi_0,p_0,p_1)
        &= 0.5+\psi(1) \\
        &> 0.5+\psi(0) \\
        &= \Phi_0(S,\sigma,\pi_0,p_0,p_1)
    \end{align*}
    since $\psi(0)<\psi(1)$ by assumption.
    Thus, an incompetent expert can profitably deviate by choosing the complex rule with non-zero probability.
    Similarly, if $(p_0,p_1)=(1,0)$ then an incompetent expert can profitably deviate by choosing the simple rule with non-zero probability.
\end{proof}

\begin{proof}[Proof of Lemma~\ref{lem:known-type-wage-example-expected-payoffs}]
    Replacing $\pi_0$ with $\pi_r$ in the proof of Lemma~\ref{lem:wage-example-expected-payoff} gives
    \begin{equation}
        \Phi_\theta(C,\sigma,\pi_0,p_0,p_1)
        =\begin{cases}
            0.5+2\sigma\theta & \text{if}\ \pi_C<\underline\pi \\
            0.5+2\sigma\theta+(0.5+2\sigma\theta)w & \text{if}\ \underline\pi\le\pi_C<\overline\pi \\
            0.5+2\sigma\theta+w & \text{if}\ \overline\pi\le\pi_C
        \end{cases} \label{eq:known-type-wage-example-expected-payoff-complex-proof}
    \end{equation}
    and
    \[ \Phi_\theta(S,\sigma,\pi_0,p_0,p_1)=\begin{cases}
        0.5+w & \text{if}\ \pi_S\ge\pi^* \\
        0.5 & \text{otherwise}.
    \end{cases} \]
    If $\pi^*=\pi_0$ then, by the definitions of $\pi_C$, $\underline\pi$, and $\overline\pi$, we have (i)~$\pi_C<\underline\pi$ if and only if $p_1(1+4\sigma)<p_0$ and (ii)~$\pi_C<\overline\pi$ if and only if $p_1(1-4\sigma)<p_0$.
    Substituting (i) and (ii) into~\eqref{eq:known-type-wage-example-expected-payoff-complex-proof} yields \eqref{eq:known-type-wage-example-expected-payoff-complex}.
    Finally, by the definition of $\pi_S$, we have $\pi_S\ge\pi_0$ if and only if $p_0\ge p_1$.
    Equation~\eqref{eq:known-type-wage-example-expected-payoff-simple} follows.
\end{proof}

\begin{proof}[Proof of Proposition~\ref{prop:known-type-wage-example-best-responses}]
    Consider an incompetent expert.
    Suppose he chooses the complex rule with probability $p$.
    Then, by~\eqref{eq:known-type-wage-example-expected-payoff-complex} and~\eqref{eq:known-type-wage-example-expected-payoff-simple}, he receives expected payoff
    \begin{align*}
        \Omega_0(\sigma,\pi_0,p,p_1)
        &= p\Phi_0(C,\sigma,\pi_0,p,p_1)+(1-p)\Omega_0(S,\sigma,\pi_0,p,p_1) \\
        &= \begin{cases}
            0.5+pw & \text{if}\ p\le p_1(1-4\sigma) \\
            0.5+0.5pw & \text{if}\ p_1(1-4\sigma)<p<p_1 \\
            0.5+(1-0.5p)w& \text{if}\ p_1\le p\le p_1(1+4\sigma) \\
            0.5+(1-p)w & \text{if}\ p_1(1+4\sigma)<p,
        \end{cases}
    \end{align*}
    which is uniquely maximized by $p=p_1$.
    Equation~\eqref{eq:known-type-wage-example-best-response-incompetent} follows.

    Now consider a competent expert.
    Suppose he chooses the complex rule with probability $p$.
    For convenience, let
    \begin{align*}
        \Omega(p)
        &\equiv \Omega_1(\sigma,\pi_0,p_0,p) \\
        &= p\Phi_1(C,\sigma,\pi_0,p_0,p)+(1-p)\Phi_1(S,\sigma,\pi_0,p_0,p)
    \end{align*}
    denote his objective given the (exogenous to him) parameters $\sigma$, $\pi_0$, $p_0$, and $w$.
    There are several cases to consider:
    \begin{enumerate}

        \item
        Suppose $4\sigma=1$.
        Then
        \[ \Phi_1(C,\sigma,\pi_0,p_0,p)=\begin{cases}
            1 & \text{if}\ p<\frac{p_0}{2} \\
            1+w & \text{if}\ \frac{p_0}{2}\le p
        \end{cases} \]
        by~\eqref{eq:known-type-wage-example-expected-payoff-complex} and hence
        \begin{align*}
            \Omega(p)
            &= \begin{cases}
                0.5+(0.5-w)p+w & p<\frac{p_0}{2}\\
                0.5+0.5p+w & \frac{p_0}{2}\le p\le p_0 \\
                0.5+0.5p+wp & p_0<p
            \end{cases}
        \end{align*}
        by~\eqref{eq:known-type-wage-example-expected-payoff-simple}.
        But if $p<p_0$ then $\Omega(p)<\Omega(p_0)$, and if $p_0<1$ then $\Omega(p_0)<\Omega(1)$.
        It follows that $\argmax_p\Omega(p)=\{1\}$ in this case.

        \item
        Suppose $4\sigma<1$.
        Then $1-4\sigma>0$, and so
        \begin{align*}
            \Omega(p)
            &= \begin{cases}
                0.5+w+p(2\sigma-w) & p<\frac{p_0}{1+4\sigma} \\
                0.5+w+p(2\sigma(1+w)-0.5w) & \frac{p_0}{1+4\sigma}\le p\le p_0 \\
                0.5+p(2\sigma(1+w)+0.5w) & p_0<p<\frac{p_0}{1-4\sigma} \\
                0.5+p(2\sigma+w) & \frac{p_0}{1-4\sigma}\le p
            \end{cases}
        \end{align*}
        by~\eqref{eq:known-type-wage-example-expected-payoff-complex} and~\eqref{eq:known-type-wage-example-expected-payoff-simple}.
        Now $\Omega(p)$ is increasing in $p$ when $p>p_0$.
        Also, it is increasing, constant, or decreasing in $p$ when $p<p_0/(1+4\sigma)$ and when $p_0/(1+4\sigma)\le p\le p_0$.
        It follows that
        \[ \argmax_p\Omega(p)\subseteq\{0\}\cup\left[\frac{p_0}{1+4\sigma},p_0\right]\cup\{1\}, \]
        and it suffices to compare the values
        \begin{align*}
            \Omega(0)
            &= 0.5+w, \\
            \Omega\left(\frac{p_0}{1+4\sigma}\right)
            &= 0.5+w+\frac{p_0(2\sigma(1+w)-0.5w)}{1+4\sigma}, \\
            \Omega(p_0)
            &= 0.5+w+p_0(2\sigma(1+w)-0.5w), \\
            \intertext{and}
            \Omega(1)
            &= \begin{cases}
                0.5+2\sigma+w & \text{if}\ p_0\le1-4\sigma \\
                0.5+2\sigma(1+w)+0.5w & \text{if}\ 1-4\sigma<p_0.
            \end{cases}
        \end{align*}
        Again, there are two cases:
        \begin{enumerate}

            \item
            Suppose $p_0\le1-4\sigma$.
            Then
            \[ \Omega(1)>\max\left\{\Omega(0),\Omega\left(\frac{p_0}{1+4\sigma}\right),\Omega(p_0)\right\} \]
            since $2\sigma\in(0,0.5)$ and so $2\sigma>\max\{0,2\sigma(1+w)-0.5w\}$.
            Thus $\argmax_p\Omega(p)=\{1\}$ in this case.

            \item
            Suppose $p_0>1-4\sigma$.
            Then $\Omega(1)-\Omega(0)=2\sigma(1+w)-0.5w$ exceeds zero if and only if $2\sigma(1+w)>0.5w$.
            If this condition holds then
            \[ \Omega(1)-\Omega(p)=\left(2\sigma(1+w)-0.5w\right)(1-p) \]
            also exceeds zero for $p\in[p_0/(1+4\sigma),p_0]$.
            The condition always holds when $2\sigma\ge w$.
            Hence, there are three cases:
            \begin{enumerate}

                \item
                If $2\sigma\ge w$ or $2\sigma(1+w)>0.5w$ then $\argmax_p\Omega(p)=\{1\}$.

                \item
                If $2\sigma<w$ and $2\sigma(1+w)<0.5w$, then
                \[ \Omega(0)>\max\left\{\Omega\left(\frac{p_0}{1+4\sigma}\right),\Omega(p_0),\Omega(1)\right\} \]
                and so $\argmax_p\Omega(p)=\{0\}$.

                \item
                If $2\sigma<w$ and $2\sigma(1+w)=0.5$, then $\Omega(p)=0.5+w$ for all $p\in\argmax_p\Omega(p)=\{0\}\cup\left[\frac{p_0}{1+4\sigma},p_0\right]\cup\{1\}$.
                
            \end{enumerate}

        \end{enumerate}

    \end{enumerate}
    Combining these cases and subcases yields
    \begin{align*}
        &\argmax_p\Omega(p) \notag \\
        &= \begin{cases}
            \{0\} & \text{if}\ 4\sigma<1\ \text{and}\ p_0>1-4\sigma\ \text{and}\ 2\sigma<w\ \text{and}\ 2\sigma(1+w)<0.5w \\
            \{0\}\cup[\frac{p_0}{1+4\sigma},p_0]\cup\{1\} & \text{if}\ 4\sigma<1\ \text{and}\ p_0>1-4\sigma\ \text{and}\ 2\sigma<w\ \text{and}\ 2\sigma(1+w)=0.5w \\
            \{1\} & \text{otherwise}.
        \end{cases}
    \end{align*}
    Equation~\eqref{eq:known-type-wage-example-best-response-competent} follows from rewriting $2\sigma(1+w)\le0.5w$ as $w\le4\sigma/(1-4\sigma)$.
\end{proof}

\begin{proof}[Proof of Corollary~\ref{crly:known-type-wage-example-equilibria}]
    Cases~(i) and~(ii) correspond to the first two pieces of~\eqref{eq:known-type-wage-example-best-response-competent}.
    I consider these two cases separately:
    \begin{enumerate}

        \item[(i)]
        Suppose $4\sigma<1$ and $\max\left\{2\sigma,\frac{4\sigma}{1-4\sigma}\right\}<w$.
        Assume towards a contradiction that $(p_0,p_1)$ defines an equilibrium.
        Then $p_0=p_1$ by Proposition~\ref{prop:known-type-wage-example-best-responses}.
        Moreover, by the same proposition, if $p_0\le1-4\sigma$ then $p_1=1>p_0$, whereas if $p_0>1-4\sigma$ then $p_1=0<p_0$.
        Thus $p_1\not=p_0$; a contradiction.
        It follows that no equilibrium exists in this case.

        \item[(ii)]
        Suppose $4\sigma<1$ and $2\sigma<w=\frac{4\sigma}{1-\sigma}$.
        There are two subcases to consider:
        \begin{enumerate}

            \item
            Suppose $p_0\le1-4\sigma$.
            Assume towards a contradiction that $(p_0,p_1)$ defines an equilibrium.
            Then $p_0=p_1$ and $p_1=1$ by Proposition~\ref{prop:known-type-wage-example-best-responses}.
            Hence $p_1>p_0$; a contradiction.

            \item
            Suppose $p_0\in(1-4\sigma,1]$.
            Then, by Proposition~\ref{prop:known-type-wage-example-best-responses}, the pair $(p_0,p_0)$ defines an equilibrium in which both competent and incompetent experts play best responses.
            Moreover, no other equilibria exist because every other pair involves at least one player not best-responding.

        \end{enumerate}
        It follows that $\{(p,p):1-4\sigma<p\le1\}$ defines the set of equilibria in this case.

    \end{enumerate}
    Outside of these two cases, competent experts have a dominant strategy to play $p_1=1$, to which incompetent experts best-respond by playing $p_0=1$.
    Then $(p_0,p_1)=(1,1)$ defines the unique equilibrium.
\end{proof}

\end{document}